\newcommand{\techrep}[2]{#1}
\newcommand{\gls}{\ensuremath{\mathtt{GLS}}\xspace}
\newcommand{\opt}{\ensuremath{\mathtt{opt}}}
\newcommand{\NE}{\mathtt{NE}}
\newcommand{\pos}{\mathtt{PoS}}
\newcommand{\poa}{\mathtt{PoA}}
\newcommand{\R}{\mathbb R}
\newcommand{\E}{\mathbb E}
\newcommand{\T}{T}
\newcommand{\prcs}{A}
\newcommand{\cov}{V}
\newcommand{\ndim}{d}
\newcommand{\SC}{\ensuremath{C}}
\DeclareMathOperator*{\argmin}{arg\,min}
\DeclareMathOperator{\trace}{trace}
\DeclareMathOperator{\diag}{diag}
\DeclareMathOperator{\dom}{dom}
\newcommand{\xb}{\boldsymbol{x}}
\newcommand{\yb}{\boldsymbol{y}}
\newcommand{\betab}{\boldsymbol{\beta}}
\newcommand{\lambdab}{\boldsymbol{\lambda}}
\newcommand{\lambdaeq}{\lambda^{*}}
\newcommand{\lambdabeq}{\lambdab^{*}}
\newcommand{\deltab}{\boldsymbol{\delta}}
\newcommand\pone{{{p}_{\textrm{min}}}}
\newcommand\ptwo{{{p}_{\textrm{max}}}}
\newtheorem{theorem}{Theorem}
\newtheorem{corollary}{Corollary}
\newtheorem{lemma}{Lemma}
\newtheorem{assumption}{Assumption} 
\begin{document}

\title{Linear Regression from Strategic Data Sources
\thanks{This paper is an extended version of ``Linear regression as a non-cooperative game'', by Ioannidis and Loiseau~\cite{ioannidis2013linear}.} 
\thanks{This work was supported by the French National Research Agency through the ``Investissements d'avenir'' program (ANR-15-IDEX-02) and through grant ANR-16-TERC0012; by the DGA; by the Alexander von Humboldt Foundation; and by MIAI @ Grenoble-Alpes. Stratis Ioannidis acknowledges support from NSF grants  CCF-1750539 and  CNS-1717213. We thank the editor and the three anonymous reviewers for their particularly thoughtful comments and feedback, that significantly improved the paper.}}
\date{\today}
\author[1]{Nicolas Gast}
\author[2]{Stratis Ioannidis}
\author[1,3]{Patrick Loiseau}
\author[1]{Benjamin Roussillon}
\affil[1]{Univ. Grenoble Alpes, Inria, CNRS, Grenoble INP, LIG}
\affil[2]{Northeastern University}
\affil[3]{Max-Planck Institute for Software Systems (MPI-SWS)}

\maketitle

\begin{abstract}
	Linear regression is a fundamental building block of statistical data analysis. It amounts to estimating the parameters of a linear model that maps input features to corresponding outputs. In the classical setting where the precision of each data point is fixed, the famous Aitken/Gauss-Markov theorem in statistics states that generalized least squares  (GLS) is a so-called ``Best Linear Unbiased Estimator'' (BLUE). 
In modern data science, however, one often faces \emph{strategic data
	sources}, namely, individuals who incur a cost for
providing high-precision data. For instance, this is the case for personal data, whose	revelation may affect an individual's privacy---which can be modeled as a cost---or in applications such as recommender systems, where producing an accurate estimate entails effort. 

In this paper, we study a setting in which features are public but
individuals choose the precision of the outputs they
reveal to an analyst. We assume that the analyst performs linear regression on this dataset, and individuals benefit from the outcome of this estimation.
We model this scenario as a game where
individuals minimize a cost comprising two components: (a) an (agent-specific)
disclosure cost for providing high-precision data; and (b) a (global)
estimation cost representing the inaccuracy in the linear model
estimate. In this game, the linear model estimate is a public good
that benefits all individuals. We establish that this game has a unique
non-trivial Nash equilibrium. We study the efficiency of this
equilibrium and we prove tight bounds on the price of stability for a
large class of disclosure and estimation costs.  Finally, we study the
estimator accuracy achieved at equilibrium. We show that,  in general, Aitken's
theorem does not hold under strategic data sources, though it does
hold if individuals have identical disclosure costs (up to a
multiplicative factor). When individuals have non-identical costs, we
derive a bound on the improvement of the equilibrium estimation cost that
can be achieved by deviating from GLS, under mild assumptions on the
disclosure cost functions.

\textbf{Keywords:} Linear regression, Aitken theorem, Gauss-Markov theorem, strategic data sources, potential game, price of stability
\end{abstract}

\section{Introduction}

The statistical analysis of data is a cornerstone of many scientific disciplines. The core problem of estimating the parameters of a model is classic, and is  well understood in the standard setting in which any noise or distortions present in the data are exogenous. Naturally, the quality of the data, as captured by such noise or distortions, is key to an estimator's accuracy. 
In many instances, however, obtaining high quality data may be associated with a cost incurred by the data source. For example, this is the case when the data is of a personal nature and provided by privacy-conscious individuals. The quality of the data provided in this case can come at a cost of a violation of privacy \cite{ghosh-roth:privacy-auction,roth-liggett,Cummings15a}. The desire for privacy incentivizes individuals to obfuscate their private information, or, in the extreme, altogether refrain from any disclosure.  An additional setting in which high-quality data may come at a cost is when the data quality depends on effort exerted \cite{Cai15a,Westenbroek19a}, i.e., improved quality is the result of increased effort expended by the data source. This setting  naturally arises in, e.g., open collaboration projects, such as wikipedia, but also in online recommender systems, where individuals need to exert effort (complete surveys, click ``like'' buttons, etc.) to disclose their preferences. Just as in the privacy case, a data source may choose to not exert the effort required to produce high-quality, low-noise responses or, in the extreme,  altogether refrain from reporting anything meaningful.

In this setting, where providing high-quality data comes at a cost, it makes sense to consider strategic behavior among data sources. In particular, one should ask: why would strategic data sources provide any data at all? The existing literature focuses either on the case where individuals receive a monetary compensation to provide data \cite{ghosh-roth:privacy-auction,Cummings15a,Cai15a,Luo15a,Westenbroek19a,Abernethy15a}, or on the case where individuals care about the quality of the estimation but only w.r.t.~predictions over their own features \cite{dekel2010incentive,Chen18a,perote2004}.
We consider another possibility, namely, that \emph{globally successful data analysis may also provide a utility to the individuals from which the data is collected}. This is evident in medical studies: an experiment may lead to the discovery of a treatment for a disease, from which an experiment subject may benefit.  In the case of  recommender systems, users may indirectly benefit from overall service improvements, as data disclosed may lead to, e.g., improved product recommendations or better-targeted advertising. Similarly, open collaboration projects, by their nature, implicitly assume a common underlying utility, linked to the success of the collaboration. If such benefits outweigh associated privacy or effort costs, individuals may consent to the collection and analysis of high-quality data, e.g., by participating in a clinical trial, completing a survey, or disclosing their preferences in a recommender service.

In this paper, we approach the above issues through a non-cooperative game, focusing on the basic statistical analysis task of \emph{linear regression}. We consider the following formal setting. A set of individuals $i\in \{1,\ldots,n\}$ participate in an experiment in which they are asked to provide data to an analyst. Each individual $i$ is associated with a feature vector $\xb_i\in \R^\ndim$, capturing public information such as age, gender, etc., and possesses a private variable $y_i\in \R$---e.g., the true answer to a survey, the outcome of a medical test, or how much they like a product. The analyst wishes to perform linear regression over the data, i.e., compute a vector $\betab\in \R^\ndim$ such that:
\begin{equation*}
 y_i \approx \betab^\T \xb_i, \qquad \text{for all }i \in \{1,\ldots,n\}. 
\end{equation*}

We assume that individuals benefit from the correct estimation of $\betab$: for example, if the analyst learns $\betab$, an individual may benefit due to, e.g., better medical treatment, improved recommendations, etc. However,
individuals \emph{do not} disclose their true private variables to the analyst. Instead, they provide a perturbed version $\tilde{y}_i$, constructed by \emph{adding noise} to the private variable $y_i$.  
This is because there is a cost associated with the disclosure of the private variable: the higher the variance of the noise an individual adds, the lower the cost (e.g., due to privacy violation or effort exerted) she incurs. On the other hand, high noise variance lowers the accuracy of the analyst's estimate of $\betab$, the linear model computed in aggregate across multiple individuals. As such, the individuals need to strike a balance between the  cost they incur through disclosure and the utility they accrue from accurate model prediction. 

We make the following contributions: 
\begin{itemize} 
\item[$(i)$] We model interactions between data sources as a non-cooperative game, in which each data source selects the precision (i.e., the inverse of the noise variance)  of the private variable she  strategically discloses. A data source's decision minimizes a cost function comprising two components: (a) a \emph{disclosure cost}, that is an increasing function of the chosen precision, and (b) an \emph{estimation cost}, that decreases as the accuracy of the analyst's estimation of $\betab$ increases. Formally, the estimation cost is a function of the covariance matrix of  the estimate of $\betab$. 
\item[$(ii)$] We characterize the Nash equilibria of the above game. In particular, we show that it is a potential game and that, under appropriate assumptions on the disclosure and estimation costs, there exists a unique pure Nash equilibrium at which individual costs are finite.
\item[$(iii)$] Armed with this result, we determine the game's efficiency, providing bounds for the price of stability  for several cases of  disclosure and estimation costs. 
\item[$(iv)$] Finally, we turn our attention to the analyst's estimation algorithm. In the presence of non-strategic data sources, the Aitken theorem\footnote{Also known as the Gauss-Markov theorem in the special case of ordinary least squares.} states that generalized least squares estimation yields minimal covariance among linear unbiased estimators. We challenge this theorem in the context of strategic data sources and obtain both positive and negative results:
  \begin{itemize}
  \item[a.] We show that, in general, an equivalent of Aitken theorem no longer holds under strategic data sources. We exhibit a series of counter-examples that show that, when data sources have non-identical disclosure cost functions, there exist linear estimators that lead to more efficient equilibria  (i.e., that attain more accurate estimates at equilibrium) than generalized least squares. 
  \item[b.] We show that, when agents have monomial cost functions with identical exponents (but with possibly non-identical multiplicative factors), then generalized least squares is optimal  among the class of unbiased linear estimators even in the strategic setting: it indeed yields the most accurate estimate at equilibrium.
  \item[c.] Finally, even when generalized least squares is suboptimal, under mild assumptions on the disclosure cost functions, we show that the improvement of the equilibrium estimation cost that can be achieved by deviating from generalized least squares is bounded by a factor that depends on the heterogeneity of  data source disclosure costs. 
  \end{itemize}
Our results imply that the optimality of the generalized least squares estimator does not persist if data sources strategically choose the variance of their data.
\end{itemize}

More broadly, we model the outcome of a statistical data analysis---the estimator's accuracy---as a \emph{public good}: data sources contribute to the public good by providing high-precision data at a disclosure cost and benefit in return from the global estimator's accuracy. As is natural in such public good games, we find that data sources typically contribute a level of data precision at equilibrium that is suboptimal from the social welfare perspective (i.e., there is partial free-riding). 
More surprisingly though, we also find that under such strategic data sources, standard statistics results are challenged: it is sometimes possible to deviate from the standard estimator to increase the public good provision at equilibrium, without involving any monetary compensation.

The remainder of this paper is organized as follows. We present related work in Section~\ref{related}. Section~\ref{model} contains a review of linear regression and the definition of our non-cooperative game. We characterize Nash equilibria in Section~\ref{nash} and discuss their efficiency in Section~\ref{stability}. Our results on the optimality (or non-optimality) of generalized least squares are in Section~\ref{gauss}, and our conclusions in Section~\ref{conclusions}. All proofs are relegated \techrep{to appendices.}{to our technical report \cite{techrep}.}

\section{Related Work}\label{related}

\paragraph{Data Perturbation for Privacy.}

Perturbing a dataset before submitting it as input to a data mining algorithm has a long history in privacy-preserving data-mining (see, e.g., \cite{vaidya2005privacy,domingo2008survey}). Independent of an algorithm, early research focused on perturbing a dataset prior to its public release \cite{traub1984statistical,duncan2000optimal}. Perturbations  tailored to specific data mining tasks have also been studied in the context of, e.g., reconstructing  the original distribution of the underlying data \cite{agrawal2000privacy}, building decision trees \cite{agrawal2000privacy}, clustering \cite{oliveira2003privacy}, and association rule mining \cite{atallah1999disclosure}. We approach such perturbation techniques via a non-cooperative setting, where individuals strategically choose the perturbation to their data. 

The above setting differs from the framework of $\epsilon$-differential privacy \cite{Dwork06,kifer2012private}, which has also been studied from the perspective of mechanism design \cite{approximatemechanismdesign}. In differential privacy, noise is added to the \emph{output} of a computation, which is subsequently publicly released. 
The analyst performing the computation is a priori trusted; as such, individuals submit unadulterated inputs. Several works study mechanisms incentivizing data disclosure under costs quantified by differential privacy~\cite{roth-liggett,Dandekar12a,ghosh-roth:privacy-auction,Cummings15a},   whereby individuals are compensated for the privacy cost they incur. In contrast, we do not assume that the analyst is trusted, which motivates input perturbation. Such input perturbations also correspond to the more recently studied notion of local differential privacy \cite{Duchi13a,Kairouz16a}, though such studies focus on the privacy/utility tradeoff, ignoring the strategic aspect of the input perturbation.

\paragraph{Strategic Data Sources and Data Elicitation.} A few recent works consider settings where  sources may choose their effort when generating data \cite{Cai15a,Luo15a,Westenbroek19a} or have heterogeneous costs due to disclosure \cite{Abernethy15a}. In all these works, data sources are assumed to maximize the payment received (minus cost of effort). We note that in Cai \emph{et al.}~\cite{Cai15a} and Westenbroek \emph{et al.}~\cite{Westenbroek19a}, which are the closest to our work, the disclosure costs of data sources are linear in the exerted effort, whereas we use more general convex costs. 
 The data elicitation literature also includes related problems, in which one tries to incentivize an expert to truthfully reveal her prediction of an event, typically using scoring rules \cite{Frongillo15a} (see also the literature on incentives in crowdsourcing \cite{Dasgupta13a}).

A number of papers also consider data acquisition in sequential settings \cite{Abernethy15a,Chen16a,Chen18b}.
All this literature, however, considers agents that aim to maximize  the payment received but are insensitive to the quality of the learning result. Moreover, agents aim to optimize payments while the learning algorithm is fixed; the only exceptions to the latter are \cite{Chorppath13a,Caragiannis16a}, which are restricted to the case of averaging and do not consider learning tasks such as regression. In contrast, in this work, we do not involve payments but  assume that data sources benefit from the result of the learning algorithm.

\paragraph{Strategy-Proof Statistical Inference.}

Several papers study regression from the perspective of mechanism design, whereby  private variables are directly reported by strategic agents. In particular, Dekel \emph{et al.}~\cite{dekel2010incentive} consider a broad class of regression problems in which data sources may misreport their private values, and determine loss functions under which empirical risk minimization is group strategyproof. The special case of linear regression is also treated, albeit in a more restricted setting, by Perote and Perote-Pe\~na \cite{perote2004}, who identify more general strategyproof mechanisms for the 2-dimensional case. More recently, Chen \emph{et al.}~\cite{Chen18a} consider a similar setting and propose a family of group strategyproof regression mechanisms for any dimension, extending the results of both \cite{dekel2010incentive} and \cite{perote2004}. As in our paper, those works assume that the independent variables (the $\xb_i$'s) are public information and mostly look at mechanism design without money. Several papers also analyze similar problems in the case of classification \cite{Meir12a,Hardt16a} (see also a recent variant in \cite{Ben-Porat19a}).

 In contrast to this line of research, our work assumes that the analyst uses a fixed algorithm (GLS or a linear unbiased estimator). We also assume that individuals choose the precision of the data reported (and not directly the reported value), and no design is required as the chosen precision is assumed known. The main difference, however, is conceptual: in \cite{perote2004,dekel2010incentive,Chen18a} agents care about the estimation error on their instance only, whereas we assume that agents benefit equally from the global downstream effects of an accurate predictor. We consider noise addition as a non-cooperative game, focusing on pure Nash equilibrium as a solution concept and studying its efficiency.

\paragraph{Non-Cooperative Regression Games.}
Closer to our setting, Hossain and Shah~\cite{Hossain19a} also consider the pure Nash equilibrium as a solution concept in regression games and investigate its efficiency, albeit in a model closer to \cite{dekel2010incentive, Chen18a} than to ours. Interestingly, this work considers the mean squared error, a standard quantity to measure a model's quality in linear regression, instead of our estimation cost based on the covariance matrix. Our estimation cost, however, includes a somewhat broader family of functions satisfying mild assumptions (see Assumptions~\ref{estimationassumption} and \ref{homogeneityf}).

Our paper is an extended version of ``Linear regression as a non-cooperative game'', by Ioannidis and Loiseau~\cite{ioannidis2013linear}.
We generalize and tighten the price of stability results, and correct Theorem~6 of \cite{ioannidis2013linear} that stated that GLS is optimal for all cost functions. We show that this result is not true in general but that (i) it holds when agents have identical cost functions (up to a multiplicative constant) and (ii) the sub-optimality of GLS in the case of non-identical disclosure cost functions can be bounded under mild assumptions.

\paragraph{Experimental Design.}

In classic experimental design \cite{pukelsheim2006optimal,atkinson2007optimum,boyd2004convex}, an analyst observes the public features of a set of experiments, and determines which experiments to conduct with the objective of learning a linear model (from non-strategic sources). The quality of an estimated model is quantified through a scalarization of its variance \cite{boyd2004convex}. As discussed in Section~\ref{sec.game}, many such scalarizations are used in the literature, including  the so-called A-optimality, C-optimality, and D-optimality criteria we define in~\eqref{specific}. We focus on non-negative scalarizations, to ensure meaningful notions of efficiency (as determined by the price of stability in Section~\ref{stability}). Among these classic scalarizations, A-optimality and E-optimality  satisfy both our technical assumptions (Assumptions \ref{estimationassumption} and \ref{homogeneityf}), while D-optimality satisfies only our convexity assumption (Assumption~\ref{estimationassumption}). As we note in Section~\ref{sec.game}, convexity implies that the information gain (i.e., the cost reduction) due to new experiments is a submodular function. This has implications about mechanism design as well. For example, Horel \emph{et al.}~\cite{horel2013budget} exploit this to produce a polytime mechanism with approximation guarantees for a version of the experimental design problem in which subjects report their private values truthfully, but may lie about the costs they require for their participation.

\paragraph{Public Good Provision Problems.}
We finally note that our model has analogies to models used in \emph{public good} provision problems (see, e.g., \cite{Morgan00a} and references therein). Indeed, the estimate variance reduction can be seen as a public good in that, when a source contributes data, all other sources in the game benefit. As is standard in such literature, our model assumes that the disclosure costs (corresponding to provision costs in public good problems) and the estimation cost (mapping to the public good benefit) are fully separable. This analogy is pushed further in \cite{Chessa15c,Chessa17b} where the authors propose a simple mechanism to increase the provision of the public good at equilibrium in the simple case of averaging (corresponding to the standard public good framework).

\section{Model Description}\label{model}

In this section, we give a detailed description of our linear regression game and the agents involved. Before discussing strategic considerations, we give a brief technical review of linear models, as well as key properties of least squares estimators; all related results presented here are classic (see, e.g., \cite{Hastie09a}).

\paragraph{Notational conventions.} 

We use boldface type (e.g., $\xb$, $\yb$, $\betab$) to denote vectors (all vectors are column vectors), and capital letters (e.g., $A$, $B$, $V$) to denote matrices. As usual, we denote by $S_{+}^\ndim,S_{++}^\ndim \subset \R^{\ndim \times \ndim}$ the sets of (symmetric) positive semidefinite (PSD) and positive definite matrices of size $d\times d$, respectively. 
For two positive semidefinite matrices $A,B\in S_+^\ndim $, we write that $A\succeq B$ if $A-B\in S_+^\ndim $;  
recall that $\succeq$ defines a partial order over $S_+^\ndim $. We say that $F:S_+^\ndim \to\R$ is non-decreasing in the positive semidefinite order if $F(A)\geq F(A')$ for any two  $A, A' \in S^d_{+}$ such that $A\succeq A'$.  Moreover, we say that a matrix-valued function $F:\R^n\to S_+^d$ is \emph{matrix convex}  if $\alpha F(\lambdab)+ (1-\alpha)F(\lambdab') \succeq F(\alpha \lambdab+(1-\alpha) \lambdab')$ for all $\alpha\in [0,1]$ and $\lambdab,\lambdab'\in \R^n$.

\subsection{Linear Models}

Consider a set of $n$ data sources, henceforth referred to as \emph{agents}, denoted by  $N\equiv\{1, \cdots, n\}$. Each agent $i \in N$ is associated with a vector $\xb_i\in \R^\ndim $, the \emph{feature vector}, which is public; for example, this vector may correspond to publicly available demographic information about the agent, such as age, gender, etc. Each  $i\in N$ is also associated with a private variable $y_i\in \R$; for example, this may express the likelihood that this agent contracts a disease, the concentration of a substance in her blood, or a true answer to a survey by that agent.

We assume that the agent's private variable $y_i$ is a linear function of her public features $\xb_i$. In particular,  there exists a vector $\betab\in\R^\ndim $, the \emph{model}, such that the private variables are given by 
\begin{align}\label{linear}
y_i = \betab^T\xb_i +\epsilon_i,\quad \text{for all}~i\in N,
\end{align}
where the ``inherent noise'' variables $\{\epsilon_i\}_{i\in N}$ are
i.i.d.\footnote{To ease notation, we assume that the
  variance of the inherent noise $\sigma^2$ is identical for all agents, but all results of the
  paper remain valid if we allow this variance (or equivalently the upper bound on the precision $1/\sigma^2$, see below) to depend on the
  identity of Agent $i$.}~zero-mean random variables in $\R$ with
finite variance $\sigma^2$. We make no further assumptions on the
noise; in particular, we \emph{do not} assume it is Gaussian.

An \emph{analyst} wishes to observe the $y_i$'s and infer the model $\betab\in \R^\ndim $. This type of inference is ubiquitous in experimental sciences, and has a variety of applications. For example, the magnitude of  $\betab$'s coordinates captures the effect that features  (e.g., age or weight) have on $y_i$ (e.g., the propensity to get a disease), while the sign of a coordinate captures positive or negative correlation. Knowing $\betab$ can also aid in prediction: an estimate of private variable $y\in \R$ of a  new sample with features $\xb\in \R^\ndim $ is given by the inner product $\betab^T\xb$. 
We note that the linear relationship between $y_i$ and $\xb_i$ expressed in \eqref{linear} is in fact quite general. For example, the case where $y_i=f(\xb)+\epsilon_i$, where $f$ is a polynomial function of degree 2, reduces to a linear model by considering the transformed feature space whose features comprise the monomials
$x_{ik}x_{ik'}$, for $1\leq k,k' \leq \ndim$. More generally, the same principle can be
applied to reduce  to  \eqref{linear} any function class spanned by a finite set of basis
functions over $\R^\ndim $ \cite{Hastie09a}.

\subsection{Generalized Least Squares Estimation}

We consider a setup in which the agents choose the precision of the data that they provide. That is, they do not directly provide $y_i$ but rather a perturbed variable $\tilde{y}_i$, which we assume is an unbiased estimate of $y_i$ with variance $\sigma_i^2$. For example, in the case of privacy, agents \emph{distort} their private variable by adding excess noise: each $i\in N$ computes $\tilde{y}_i = y_i+z_i$ where $z_i$ is a zero-mean random variable with variance $\sigma_i^2$; we assume that $\{z_i\}_{i\in N}$ are independent, and are also independent of the inherent noise variables $\{\epsilon_i\}_{i\in N}$. In the case of effort, the variance $\sigma_i^2$ captures the effort exerted by the agent in generating the label $\tilde{y}_i$.   Each agent reveals to the analyst (a) the perturbed variable $\tilde{y}_i$ and (b) the variance $\sigma_i^2$. As a result, the aggregate variance of the reported value is $\sigma^2+\sigma_i^2$ and its precision (the inverse of the aggregate variance) is $\lambda_i\equiv \frac{1}{\sigma^2+\sigma_i^2}$. 

Note that, as a consequence of the above description, our model assumes that the analyst can observe the (true) precision of the private data revealed by the analyst. This is reasonable in settings where the data is stored in a trusted database and the agent grants access to it under a given precision, and the noise is added by a third party (e.g., the database itself). In medical research for instance, one can imagine that the data is stored in a hospital database; a patient would then grant access to it with a given precision and the analyst would receive the perturbed data directly from the hospital. In other applications, such as surveys, one can also imagine that, rather than providing a specific value, agents would provide an interval, whose size naturally translates to  precision.

In turn, having access to the perturbed variables $\tilde{y}_i$, $i\in N$, and the corresponding precisions, the analyst estimates $\betab$ through \emph{generalized least squares} (\gls) estimation. Denote by $\lambdab=[\lambda_i]_{i\in N}$ the vector of precisions and by $\Lambda = \diag(\lambdab)$ the diagonal matrix whose diagonal is given by vector $\lambdab$. Then, the generalized least squares estimator is given by:
\begin{align}
\label{glsq}
\hat{\betab}_{\gls} = \argmin_{\betab\in R^\ndim}\left( \sum_{i\in N} \lambda_i(\tilde{y}_i -\betab^T \xb_i)^2 \right)= (X^T\Lambda X)^{-1}X^T \Lambda\,\tilde{\yb},
\end{align}
where  $\tilde\yb=[\tilde{y}_i]_{i\in N}$ is the $n$-dimensional vector of perturbed variables, and $X=[\xb_i^\T]_{i\in N}\in \R^{n\times \ndim}$ the $n\times \ndim$ matrix whose rows comprise the transposed feature vectors. Throughout our analysis, we assume that $n\ge \ndim$ and that $X$ has rank $d$. 

Note that $\tilde{\yb}\in \R^n$ is a random variable and as such, by \eqref{glsq}, so is $\hat{\betab}_\gls$. It can be shown that $\E(\hat{\betab}_\gls)=\betab$ (i.e., $\hat{\betab}_\gls$ is unbiased), and
$$\cov(\lambdab) \equiv Cov(\hat{\betab}_\gls) = \E\left[(\hat{\betab}_\gls-\betab)^T(\hat{\betab}_\gls-\betab)\right] = (X^\T \Lambda X)^{-1}.$$
The  covariance $V$ captures the uncertainty of the estimation of $\betab$. The matrix 
$$\prcs(\lambdab) \equiv X^T\Lambda X  = \textstyle\sum_{i\in N}\lambda_i \xb_i\xb_i^\T$$ is known as the \emph{precision} matrix. It is positive semidefinite, i.e., $\prcs(\lambdab) \in S_+^\ndim$,
but it may not be invertible: this is the case when $\mathrm{rank}(X^T\Lambda)<d$,  i.e., the vectors $\xb_i$, $i\in N$, for which $\lambda_i>0$, do not span $\R^\ndim$. Put differently, if the set of agents providing useful information does not include $d$ linearly independent vectors, there exists a direction  $\xb\in\R^\ndim$ that is a ``blind spot'' to the analyst: the analyst has no way of predicting the value $\betab^\T\xb$. In this degenerate case the number of solutions to the least squares estimation problem \eqref{glsq} is infinite, and the covariance is not well-defined (it is infinite in all such directions $\xb$).  
Note however that, since $X$ has rank $d$ (and hence $X^T X$ is invertible), the set of $\lambdab$ for which the precision matrix is invertible is non-empty. In particular, it contains $(0, 1/\sigma^2]^n$ since $\prcs(\lambdab)\in S_{++}^d $ if $\lambda_i>0$ for all $i\in N$.

\subsection{Non-Cooperative Game Model of Strategic Data Sources}
\label{sec.game}

The perturbed variables $\tilde{y}_i$ are motivated by the fact that strategic data sources incur a cost to provide high-precision data---for instance, due to privacy concerns, an agent may be reluctant to grant unfettered access to her private variable or release it in the clear. On the other hand, it may be to the agent's advantage that the analyst learns the model $\betab$. In our running medical example, learning that, e.g., a disease is correlated to an agent's weight or her cholesterol level may lead to a cure, which in turn may be beneficial to the agent.

We model the above considerations through cost functions. Recall that the action of each agent $i\in N$ amounts to choosing the noise level of the perturbation, captured by the variance $\sigma_i^2\in [0,\infty]$. For notational convenience, we use the equivalent representation $\lambda_i=1/(\sigma^2+\sigma_i^2)\in [0,1/\sigma^{2}]$ for the action of an agent.
Note that $\lambda_i=0$ (or, equivalently, infinite variance $\sigma_i^2$) corresponds to no participation: in terms of estimation through \eqref{glsq}, it is as if this perturbed value is not reported. 

Each agent $i\in N$ chooses her action $\lambda_i \in [0, 1/\sigma^2]$ to minimize her cost
\begin{align}
 \label{costfun}
J_i(\lambda_i, \lambda_{-i}) = c_i(\lambda_i)+  f(\lambdab),
\end{align}
where we use the standard notation $\lambda_{-i}$ to denote the collection of actions of all agents but $i$. 
The cost function $J_i:\R^{n}_+\to\R_+$ of agent $i\in N$ comprises two non-negative components. 
We refer to the first component $c_i:\R_+\to \R_+$ as the \emph{disclosure cost}: it is the cost that the agent incurs for providing the perturbed variable. The second component  is the \emph{estimation cost}, and we assume that it takes the form $f(\lambdab) = 
F(\cov(\lambdab))$, if $\prcs(\lambdab)\in S_{++}^\ndim$, and $f(\lambdab)=\infty$ otherwise.
The mapping $F:S_{++}^\ndim\to \R_+$ is known as a \emph{scalarization} \cite{boyd2004convex}. It maps the covariance matrix  $\cov(\lambdab)$ 
to a scalar value $F(\cov(\lambdab))$, and captures how well the analyst can estimate the model $\betab$. The estimation cost $f: \R_+^n\to \bar{\R}_+=\R_+\cup\{\infty\}$  is the so-called \emph{extended-value extension} of $F(V(\lambdab))$: it equals $F(V(\lambdab))$ in its domain, and  $+\infty$ outside its domain. 

\paragraph{Main Assumptions.} Throughout our analysis, we make the
following two assumptions:
\begin{assumption} 
\label{privacyassumption} 
The disclosure costs $c_i:\R_+\to \R_+$, $i\in N$, are non-negative, continuous, non-decreasing and convex. 
\end{assumption}
\begin{assumption} 
\label{estimationassumption} 
The scalarization $F:S_{++}^\ndim\to \R_+$ is non-negative, continuous, increasing in the positive semidefinite order, and convex.
\end{assumption}

The monotonicity assumptions in Assumptions~\ref{privacyassumption} and \ref{estimationassumption} are standard and  natural. Increasing the precision $\lambda_i$ leads to a higher disclosure cost. 
In contrast, increasing $\lambda_i$ can only decrease the estimation cost: this is because  decreasing the variance of an agent's provided perturbed variable also decreases the variance in the positive semidefinite sense (as the matrix inverse is a PSD-decreasing function).

The convexity assumption in Assumption~\ref{estimationassumption} is also standard and natural. Intuitively, the naturalness of Assumption~\ref{estimationassumption} stems from the following observation:  the convexity of $F$ implies that the so called \emph{information gain}, i.e., the relative reduction in $F$ as a new label is collected, exhibits a diminishing returns property, as additional labels affect estimation quality less and less.
Scalarizations of positive semidefinite matrices and, in particular, of the covariance matrix $\cov(\lambdab)$, are abundant in statistical inference literature in the context of experimental design~\cite{boyd2004convex,pukelsheim2006optimal,atkinson2007optimum} (also known as batch active learning). Similar to our setting, in experimental design  an analyst  has access to samples with known feature vectors $\xb_i\in \mathbb{R}^d$, $i\in N$, and wishes to conduct a limited number of $k$ experiments, where $k\ll N$, to collect labels $y_i\in \mathbb{R}$ for a subset of these samples. Given  budget $k$, the experimental design problem amounts to determining which labels to collect. The standard approach is to accomplish this by minimizing a scalarization function of the covariance of the estimator applied to the labels selected~\cite{boyd2004convex,pukelsheim2006optimal,atkinson2007optimum}. Three examples of such estimators encountered often in practice are the so-called A-optimality, E-optimality, and D-optimality criteria:
\begin{align}\label{specific}
	F_1(\cov) &= \trace (\cov), 
	& F_2(\cov) &= \| \cov \|_F^2,
	& F_3(\cov) &= \log\det (I+\cov),
\end{align}
where $\| \cdot \|_F$ is the Frobenius norm and $I$ is the identity matrix.
All three scalarizations satisfy Assumption~\ref{estimationassumption}. The convexity of these scalarizations implies that, if repetitions are allowed (i.e., an experiment can be conducted multiple times), the analyst can determine which fraction of her experiments should be performed on a given sample by solving a convex optimization problem (see, e.g., \cite{boyd2004convex}). On the other hand, if repetitions are not allowed, convexity implies that experimental design can be cast as a submodular maximization problem subject to cardinality constraints (see, e.g., \cite{horel2013budget}), which is NP-hard for the  objectives in \eqref{specific} but admits a poly-time approximation. Submodular maximization arises precisely due to the aforementioned diminishing returns property of the information gain under new labels; this, in turn, a direct consequence of Assumption~\ref{estimationassumption}.

Note that, as a further consequence of Assumption \ref{estimationassumption}, the extended-value extension $f$ is convex (in $\lambdab$). The convexity of $F(V(\cdot))$ follows from the fact that it is the composition of the increasing convex function $F(\cdot)$ with the matrix convex function $V(\cdot)$; the latter is convex because the matrix inverse is matrix convex and the precision matrix $A(\lambdab)$ is an affine function of $\lambdab$.

\paragraph{Additional Assumptions.}
Our result on Nash equilibrium existence and uniqueness (Theorem~\ref{thm.Nash}) relies on Assumptions~1 and 2. Our bounds on the price of stability (Theorem~\ref{generalizedBound}) and our Aitken-type results (Theorem~\ref{GLSoptimaltofactor}) use two additional assumptions that further constrain the shape of the disclosure costs and the scalarization function:
\begin{assumption}\label{homogeneityc}
There exist $1\le\pone\le\ptwo\in\R_+\cup\{+\infty\}$ such that, for all $i \in N$, the disclosure cost $c_i : \R_+ \rightarrow \R_+$ satisfies:
\begin{align}
    a^{\pone}c_i(\lambda) \le c_i(a\lambda) \le a^{\ptwo}c_i(\lambda),   \qquad     \text{for all $\lambda \in \R_+$ and $a\ge1$}. 
\end{align}
\end{assumption}

\begin{assumption}\label{homogeneityf}
There exists $q\ge 1$ such that the scalarization $F:S_{++}^\ndim\to \R_+$ is $q$-homogeneous, i.e., it satisfies:
\begin{align}
  F(aM) = a^{q}F(M), \qquad \text{for all $M \in S_{++}^\ndim$ and $a \ge 1$}.
\end{align}
\end{assumption}

Intuitively, Assumption~\ref{homogeneityc} captures ``near-homogeneity'' of the disclosure cost functions. It is, for example, satisfied when all agents have monomial disclosure costs $c_i(\lambda)=r_i\lambda^{p_i}$, where $r_i$ is a constant, with different exponents $p_i\in[\pone,\ptwo]$. Assumption~\ref{homogeneityf} is also a homogeneity assumption. It holds for a broad class of interesting scalarizations, such as any norm taken to any power. In particular, it holds for $F_1$ and $F_2$ in \eqref{specific}, i.e., the trace and the squared Frobenius norm (with $q = 1$ and $q = 2$ respectively), which are classical scalarizations in the statistical inference literature in the context of experimental design~\cite{boyd2004convex,pukelsheim2006optimal,atkinson2007optimum}. Note that  Assumption~\ref{homogeneityf} also implies that $f(a\lambdab) = a^{-q}f(\lambdab)$, for all $\lambdab \in [0, 1/\sigma^{2}]^n$ and $a \ge 1$.

\paragraph{Game notation}
We denote by $\Gamma = \langle N,  [0, 1/\sigma^{2}]^n, \left( J_i \right)_{i\in N} \rangle$ the game with set of agents $N = \{1, \cdots, n\}$, where each each agent $i\in N$ chooses her action $\lambda_i$ in her action set $[0, 1/\sigma^{2}]$ to minimize her cost $J_i: [0, 1/\sigma^{2}]^n\to \R_+$, given by \eqref{costfun}. We refer to a $\lambdab\in  [0, 1/\sigma^{2}]^n$ as a \emph{strategy profile} of the game $\Gamma$.
We analyze the game as a \emph{complete information game}, i.e., we assume that the set of agents, the action sets and utilities are known by all agents.

\section{Nash Equilibria}\label{nash}

We begin our analysis by characterizing the Nash equilibria of the game $\Gamma$. In the game $\Gamma$, each agent chooses her contribution $\lambda_i$ to minimize her cost. A Nash equilibrium (in pure strategy) is a strategy profile $\lambdabeq$ satisfying  
\begin{equation*}
\lambdaeq_i \in \argmin_{\lambda_i} J_i(\lambda_i, \lambdaeq_{-i}), \quad \textrm{ for all } i\in N.
\end{equation*}

Observe first that $\Gamma$ is a potential game \cite{Monderer1996a}. Indeed, define the function $\Phi: [0, 1/\sigma^{2}]^n\to \bar{\R}$ such that  
\begin{equation}
\label{eq.defPhi}
\Phi(\lambdab) = f(\lambdab) + \sum_{i\in N} c_i(\lambda_i), \quad (\lambdab \in [0, 1/\sigma^{2}]^n). 
\end{equation}
Then for every $i\in N$ and for every $\lambda_{-i} \in [0, 1/\sigma^{2}]^{n-1}$, we have
\begin{equation}
\label{eq.potential}
J_i(\lambda_i, \lambda_{-i}) - J_i(\lambda^{\prime}_i, \lambda_{-i}) = \Phi(\lambda_i, \lambda_{-i}) - \Phi(\lambda^{\prime}_i, \lambda_{-i}), \quad \forall \lambda_i, \lambda^{\prime}_i \in [0, 1/\sigma^{2}].
\end{equation}
Therefore, $\Gamma$ is a potential game with potential function $\Phi$. 
From \eqref{eq.potential}, we see that (as for any convex potential game) the set of Nash equilibria coincides with the set of local minima of function $\Phi$.

Note that there may exist Nash equilibria $\lambdabeq$ for which $f(\lambdabeq)=\infty$. For instance, if $\ndim\ge2$, $\lambdabeq = 0$ is a Nash equilibrium. Indeed, in that case, no agent has an incentive to deviate since a single $\lambda_i>0$ still yields a non-invertible precision matrix $A(\lambdab)$.  In fact, any profile $\lambdab$ for which $A(\lambdab)$ is non-invertible, and remains so under unilateral deviations, is an equilibrium. 
We call such Nash equilibria (at which the estimation cost is infinite) \emph{trivial}. Existence of trivial equilibria can be avoided using slight model adjustments: for instance, one can alter the game definition to disallow infinite variances. Alternatively,  the existence of $d$ non-strategic agents whose feature vectors span $\R^d$ is also sufficient to enforce a finite covariance at all $\lambdab$ across strategic agents.

In the remainder, we focus on the more interesting \emph{non-trivial} equilibria. 
Using the potential game structure of $\Gamma$, we derive the following result. 
\begin{theorem}
\label{thm.Nash}
Under Assumptions \ref{privacyassumption} and \ref{estimationassumption}, there exists a unique non-trivial equilibrium of the game $\Gamma$.
\end{theorem}

This result is proved in Appendix \ref{proof.thm.Nash}.
The potential game structure of $\Gamma$ has another interesting implication: if agents start from an initial strategy profile $\lambdab$ such that $f(\lambdab) < \infty$, the so called \emph{best-response dynamics} converge towards the unique non-trivial equilibrium (see, e.g., \cite{Sandholm10a}). This implies that the non-trivial equilibrium is the only equilibrium reached when, e.g., all agents start with non-infinite noise variance.

\section{Price of Stability}\label{stability}

Having established the uniqueness of a non-trivial equilibrium in our game, we turn our attention to issues of efficiency. We define the \emph{social cost} function $\SC:\R^n\to \R_+$ as the sum of all agent costs, and say that a strategy profile $\lambdab^\opt$ is \emph{socially optimal} if it minimizes the social cost,  i.e.,
\begin{equation*}
\SC(\lambdab) = \sum_{i\in N} c_i(\lambda_i)+nf(\lambdab), \qquad\text{and}\qquad \lambdab^\opt \in \argmin_{\lambdab\in [0,1/\sigma^{2}]^n} \SC(\lambdab). 
\end{equation*}
Let $\opt = \SC(\lambdab^\opt)$ be the minimal social cost.
We define the \emph{price of stability} (\emph{price of anarchy}) as the ratio of the social cost of the best (worst) Nash equilibrium in $\Gamma$ to \opt, i.e.,
$$\pos = \min_{\lambdab \in \NE} \frac{ \SC(\lambdab)}{\opt} , \qquad\text{and}\qquad \poa = \max_{\lambdab \in \NE} \frac{ \SC(\lambdab)}{\opt} , $$
where $\NE\subseteq [0,1/\sigma^{2}]^n$ is the set of Nash equilibria of $\Gamma$. 
Clearly, in the presence of trivial equilibria, the price of anarchy is infinity. We thus turn our attention to determining the price of stability. Note however that since the non-trivial equilibrium is unique (Theorem~\ref{thm.Nash}), the price of stability and the price of anarchy coincide under the slight model adjustments discussed in Section~\ref{nash} that eliminate trivial equilibria. 

The fact that our game admits a potential function has the following immediate consequence (see, e.g., \cite{Schafer11a,Sandholm10a}):
\begin{theorem}\label{general}
Under Assumptions~\ref{privacyassumption} and~\ref{estimationassumption}, $\pos\leq n$.
\end{theorem}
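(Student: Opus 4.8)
The plan is to exploit the tight relationship between the social cost and the potential function. Since $f$ and the privacy costs $c_i$ are non-negative (Assumptions~\ref{privacyassumption} and~\ref{estimationassumption}), and $n\ge 1$, for every strategy profile $\lambdab$ we have
\[ \Phi(\lambdab) = \sum_{i\in N}c_i(\lambda_i)+f(\lambdab) \;\le\; \sum_{i\in N}c_i(\lambda_i)+nf(\lambdab) = \SC(\lambdab) \;\le\; n\sum_{i\in N}c_i(\lambda_i)+nf(\lambdab) = n\,\Phi(\lambdab), \]
so that $\Phi\le \SC\le n\Phi$ pointwise on $[0,1/\sigma^{2}]^n$ (with the convention that both sides equal $+\infty$ simultaneously outside $\dom f$). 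This pointwise sandwich is the only structural input required.

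Next I would invoke the potential-game analysis of Section~\ref{nash}: the Nash equilibria of $\Gamma$ are exactly the local minima of $\Phi$, and since $\Phi$ is strictly convex on $\dom\Phi=\dom f$ (see the proof of Theorem~\ref{thm.Nash}), its unique non-trivial equilibrium $\lambdabeq$ is the global minimizer of $\Phi$ over $\dom\Phi$; in particular $\Phi(\lambdabeq)\le\Phi(\lambdab^{\opt})$. I would also record that $\opt<\infty$: any profile with all coordinates strictly positive lies in $\dom f$ (then $\prcs(\lambdab)\in S_{++}^{\ndim}$) and has finite social cost, so that the socially optimal profile $\lambdab^{\opt}$ lies in $\dom f$ and every quantity below is finite. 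The trivial equilibria have $f=\infty$, hence social cost $\infty$; they are thus irrelevant to $\pos=\min_{\lambdab\in\NE}\SC(\lambdab)/\opt$, and would only affect $\poa$, which is indeed $\infty$ as already observed.

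Combining these observations,
\[ \SC(\lambdabeq)\;\le\; n\,\Phi(\lambdabeq)\;\le\; n\,\Phi(\lambdab^{\opt})\;\le\; n\,\SC(\lambdab^{\opt})\;=\;n\cdot\opt, \]
where the first inequality is the bound $\SC\le n\Phi$ evaluated at $\lambdabeq$, the second is optimality of $\lambdabeq$ for $\Phi$, and the third is the bound $\Phi\le\SC$ evaluated at $\lambdab^{\opt}$. Dividing through by $\opt$ (which we may assume strictly positive, the statement being vacuous otherwise) gives $\pos\le \SC(\lambdabeq)/\opt\le n$, as claimed.

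There is no genuinely hard step here: this is the classical potential-function bound on the price of stability, which is why the paper presents it as an immediate consequence. The only points that require care are bookkeeping ones --- verifying that both $\lambdabeq$ and $\lambdab^{\opt}$ belong to the effective domain $\dom f$, so that the chain of inequalities involves only finite quantities, and noting that the trivial equilibria do not enter the definition of $\pos$.
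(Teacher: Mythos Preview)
Your argument is correct and is essentially the paper's own proof: the paper also uses $\Phi(\lambdabeq)\le\Phi(\lambdab^{\opt})\le\opt$ (from $f\ge 0$) together with $\SC(\lambdabeq)\le n\Phi(\lambdabeq)$ (from $c_i\ge 0$), which is exactly your chain rewritten. Your added bookkeeping about $\dom f$ and the irrelevance of trivial equilibria to $\pos$ is a welcome clarification but does not change the argument.
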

 
The proof of Theorem~\ref{general} can be found in Appendix~\ref{proof.general}. Improved bounds can be obtained for specific estimation and disclosure cost functions. The following result provides tighter bounds when the disclosure costs and the scalarization satisfy Assumptions~\ref{homogeneityc} and \ref{homogeneityf}.
\begin{theorem}\label{generalizedBound}
In addition to Assumptions \ref{privacyassumption} and \ref{estimationassumption}, assume that the disclosure cost functions satisfy Assumption~\ref{homogeneityc} with $\pone\ge1$ and $\ptwo\in\R\cup\{\infty\}$ and that the scalarization F satisfies Assumption~\ref{homogeneityf} for some $q \ge 1$. Then, the price of stability satisfies $\pos \leq {n}^{\frac{q}{\pone + q}}$.  Additionally, for all $\pone, q \ge 1$, and all $\varepsilon > 0$, there exists a game in which the estimation cost and the disclosure costs satisfy Assumptions~\ref{homogeneityc} and ~\ref{homogeneityf}, respectively,  such that $\pos \geq {n}^{\frac{q}{\pone+q}}(1-\varepsilon)$.
\end{theorem}

The proof of Theorem~\ref{generalizedBound} can be found in
Appendix~\ref{proofofthmgeneralized}. Note that, as the bound does not
depend on $\ptwo$, we can set $\ptwo=\infty$ in
Assumption~\ref{homogeneityc}, which is equivalent to replacing this assumption by
\begin{align*}
  a^{\pone}c_i(\lambda) \le c_i(a\lambda),\qquad
   \text{for all $\lambda \in \R_+$ and $a \ge 1$.}
\end{align*}
The proof of the upper bound relies on deriving a ``good'' solution from the social optimum and showing that, if the $\pos$ is too high, this ``good'' solution attains a lower potential than a Nash equilibrium (a contradiction).
The proof of the lower bound in Theorem~\ref{generalizedBound} relies on explicitly characterizing  the socially optimal profile in a certain game class, and showing it equals the Nash equilibrium $\lambdabeq$ multiplied by a scalar. We note that the theorem states that, among monomial disclosure costs and for any estimation cost satisfying Assumption \ref{homogeneityf}, the largest $\pos$ is $n^{\frac{q}{1 + q}}$ and is attained for linear disclosure costs. Similarly, among all estimation costs satisfying Assumption \ref{homogeneityf} and  all disclosure costs satisfying the assumptions presented in Theorem~\ref{generalizedBound}, the largest $\pos$ is $n$; this is approached as $q$ tends to infinity. 
 
We note that a similar worst-case efficiency of linear functions among
convex cost families has also been observed in the context of other
games, including routing \cite{Roughgarden02a} and resource allocation
games \cite{Johari04a}. As such, Theorem~\ref{generalizedBound}
indicates that this behavior emerges in our linear regression game as
well but only concerning the disclosure cost: We observe a worst-case
efficiency of linear functions in this game for the disclosure cost but a
worst-case efficiency of highly convex functions for the estimation
cost.

\section{An Aitken-Type Theorem for Nash Equilibria}\label{gauss}

Until this point, we have assumed that the analyst uses the generalized least-square estimator~\eqref{glsq} to estimate model $\betab$. In the non-strategic case,  where  $\lambdab$ (and, equivalently, the added noise variance) is fixed, the generalized least-square estimator is known to satisfy a strong optimality property: the so-called Aitken/Gauss-Markov theorem, which we briefly review below, states that it is a ``Best Linear Unbiased Estimator'', a property commonly referred to as BLUE. In this section, we investigate how this result extends in the strategic case, i.e., when  $\lambdabeq$ is not a priori fixed but is the equilibrium reached by agents; crucially, the latter depends on the estimator used by the analyst.

\subsection{Linear Unbiased Estimators and the Aitken Theorem}

A \emph{linear} estimator $\hat{\betab}_L$ of the model $\betab$ is a linear map of the perturbed variables $\tilde{\yb}$; i.e., it is an estimator that can be written as $\hat{\betab}_L = L\tilde{\yb}$ for some matrix $L\in \R^{\ndim \times n}$. A linear estimator is called \emph{unbiased} if $\E[L\tilde{\yb}]=\betab$ (the expectation taken over the inherent and extra noise). Recall by~\eqref{glsq} that the generalized least-square estimator $\hat\betab_{\gls}$ is an unbiased linear estimator with $L=(X^T\Lambda X)^{-1}X^T\Lambda$ and covariance $Cov(\hat{\betab}_\gls) = (X^\T \Lambda X)^{-1}$.
Any linear estimator $\hat{\betab}_L=L\tilde{\yb}$ can be written
without loss of generality as
\begin{align} L = (X^T\Lambda X)^{-1}X^T \Lambda + D^T,\label{LE}\end{align} where
	\begin{align}\label{D}
		D =D(X)\in \mathbb{R}^{d\times n},
	\end{align}%
is a matrix that may depend on $X$ but does not depend on $\Lambda$.
It is easy to verify that $\hat\betab_L$ is unbiased if and only if 
\begin{equation}
\label{Dconst}
D^T X = 0;
\end{equation} 
in turn, using this result, the covariance of any linear unbiased estimator can be shown to be  
\begin{equation}
  \label{eq:cov}
Cov(\hat{\betab}_L) =  (X^\T \Lambda X)^{-1} +D^\T  \Lambda^{-1} D  \succeq  Cov(\hat{\betab}_\gls). 
\end{equation}

In other words, the covariance of the generalized least-square estimator is minimal in the positive-semidefinite order among the covariances of \emph{all linear unbiased estimators}. This optimality result is known as the Aitken theorem~\cite{aitken}. Applied specifically to homoscedastic noise (i.e., when all noise variances are identical), it is known as the Gauss-Markov theorem~\cite{Hastie09a}, which establishes the optimality of the ordinary least squares estimator. Both theorems provide a strong argument in favor of using least squares to estimate $\betab$, in the presence of fixed noise variance (i.e., non-strategic agents).

\subsection{Extension of the Non-Cooperative Game to Linear Unbiased
  Estimators}

Suppose now that the data analyst uses a linear unbiased estimator $\hat\betab_L$ with a given matrix $L\in \R^{n\times\ndim}$ which may depend on $X$. Similarly to the model introduced in Section~\ref{sec.game}, we define a game $\Gamma_{L}$ in which each agent $i$ chooses her $\lambda_i$ to minimize her cost; this time, however, the estimation cost depends on the variance of $\hat\betab_L$. A natural question to ask is the following: it is possible that, despite the fact that the analyst uses an estimator that is ``inferior'' to $\hat\betab_\gls$ in the BLUE sense, an equilibrium reached under $\hat\betab_L$ is \emph{better} than the equilibrium reached under $\hat\betab_\gls$ in terms of equilibrium estimation cost? If so, despite the Aitken theorem, the data analyst would have an incentive to use $\hat\betab_L$ instead and to inform the agents that she will use $\hat\betab_L$ and not $\hat\betab_{\gls}$.

\sloppy
In this section, we provide both a positive and a negative answer to this question, depending on specific assumptions on the disclosure costs. Formally, we consider the game $\Gamma_{L} = \langle N, [0, 1/\sigma^2]^n, (J_i)_{i\in N} \rangle$  defined as in Section~\ref{sec.game}, except that the estimation cost is the extended-value extension of $F(V_L(\lambdab))$ with  
\begin{align}
  V_L(\lambdab) \equiv (X^\T \Lambda X)^{-1} +D^\T  \Lambda^{-1} D,
  \quad\quad (\Lambda = \diag \lambdab), \label{newV}
\end{align}
where $D$ is defined as in \eqref{Dconst}. 

Recall that $D$ may depend on $X$ but does not depend on the
precision $\Lambda$.
\fussy
Then, $\Gamma_L$ is still a potential game with potential function $\Phi(\lambdab) = f_L(\lambdab) + \sum_{i\in N} c_i(\lambda_i)$ where $f_L(\lambdab)=F(V_L(\lambdab))$. This potential function has the same form as the potential of the original game, given by~\eqref{eq.defPhi}.  Moreover, as the function $V_L(\cdot)$ given by \eqref{newV} is a matrix convex function, the extended-value extension $f_L(\cdot)$ is convex. This shows that the potential function is convex.  Since the proof of Theorem~\ref{thm.Nash} mostly relies on the convexity of the potential, a straightforward adaptation yields the following result.
\begin{theorem}
Under Assumptions \ref{privacyassumption} and \ref{estimationassumption}, for any linear estimator $L\in \R^{n\times \ndim}$, there exists a unique non-trivial equilibrium of the game $\Gamma_L$.
\end{theorem}
As for the case of $\gls$, this result follows from the uniqueness of a minimizer of the potential function attained in the effective domain. In what follows, we denote the unique non-trivial equilibrium of $\Gamma_L$ by $\lambdabeq_{L}$ and we denote by $\lambdabeq_{\gls}$ the equilibrium of the game $\Gamma$ with the same parameters except for the estimator.

\subsection{Optimality of \gls}

\subsubsection{Theoretical Bound and Optimality Condition}

For a given linear unbiased estimator $L$, the estimation cost at equilibrium is $f_{L}(\lambdabeq_{L})$. We say that a linear estimator is efficient if it provides a small estimation cost at equilibrium. In the following theorem, we provide both a negative and a positive result about the efficiency of $\gls$: on the one hand, $\gls$ is not always the most efficient estimator; on the other hand, under Assumptions~\ref{homogeneityc} and \ref{homogeneityf}, the ratio between the estimation cost at equilibrium of $\gls$ and any other estimator is bounded by $\frac{\ptwo (q + \pone)}{\pone (q + \ptwo)}$; in order words, $\gls$ is never too far from the most efficient estimator.

\begin{theorem}\label{GLSoptimaltofactor}
Assume that the disclosure cost and scalarization functions satisfy Assumptions \ref{privacyassumption} and \ref{estimationassumption}. Then:

\noindent (i) There exists a game $\Gamma$ such that $\gls$ is not the most efficient estimator; i.e., there exists an unbiased linear estimator $L$ such that, for these game parameters,
  \begin{align*}
    f_L(\lambdabeq_{L}) < f_{\gls}(\lambdabeq_{\gls}). 
  \end{align*}

 \noindent (ii) For all games that additionally satisfy Assumptions \ref{homogeneityc} and
  \ref{homogeneityf}, $\gls$ is $\frac{\ptwo (q + \pone)}{\pone (q + \ptwo)}$-optimal,
  i.e., for all unbiased estimators $L$
  \begin{equation*}
     f_{\gls}(\lambdabeq_{\gls}) \le  \frac{\ptwo (q + \pone)}{\pone (q + \ptwo)} f_{L} (\lambdabeq_{L}).
  \end{equation*}
\end{theorem}

The proof is provided in Appendix~\ref{GLSOptimal}. 
Note that the bound in Theorem~\ref{GLSoptimaltofactor}\emph{(ii)} is clearly smaller than or equal to $\ptwo/\pone$. By remarking that it can be written as $(1+\frac{q}{\pone})/(1+\frac{q}{\ptwo})$, it is easy to see that it is also smaller than or equal to $1+q$. This shows that $\gls$ is $\ptwo/\pone$-optimal for any $q$ and $(1+q)$-optimal for any $\pone, \ptwo$. 
Note also that Theorem~\ref{GLSoptimaltofactor}\emph{(ii)} trivially implies the following:
\begin{corollary}\label{Optimality_of_GLS}
  Under Assumptions \ref{privacyassumption} and \ref{estimationassumption}, Assumption \ref{homogeneityc} with $\pone = \ptwo = p$, and Assumption \ref{homogeneityf}, $\gls$ is the most efficient estimator. 
\end{corollary}
Note that $\pone = \ptwo = p$, which literally translates to $c_i(a\lambda) = a^p c_i(\lambda)$ for all $i\in N$, $\lambda \in \R_+$ and $a\ge1$, means that all agents have monomial costs functions with the same exponent. Put differently, for all $i$, there exists a constant $r_i > 0$ such that $c_i(\lambda_i) = r_i \lambda_i^p$.
Theorem~\ref{GLSoptimaltofactor}\emph{(i)} may seem counter-intuitive
as $\gls$ is optimal in the case of non-strategic agents: by Aitken's
theorem, if  precisions are fixed and known, then the best linear
unbiased estimator is $\gls$, i.e., for all $\lambdab$:
$f_L(\lambdab)>f_{\gls}(\lambdab)$. Our result demonstrates that this
is not the case with strategic agents. 

\subsubsection{Numerical Illustration of the Non-Optimality of \gls} 
    
    The proof of the non-optimality of \gls 
  (Theorem~\ref{GLSoptimaltofactor}(i)) is a constructive proof that
  uses a counter-example with two agents in a one-dimensional model
  ($d=1$) where both agents have the same public data. This raises the question of whether the suboptimality of \gls arises in higher dimensions or, more generally, in more complicated scenarios. 
  Although extending our analytical proof to more general cases appears to be difficult,
  in this section, we provide three numerical counter-examples
  that illustrate the gap of sub-optimality of \gls. In particular, our numerical counter-examples suggest that the sub-optimality of \gls is not limited to the simple counter-example of our analytical proof.

  These counter-examples are constructed by using an estimator
  $L(\delta)$ equal to $\gls$ plus a small perturbation
  term of the form $\delta$ times $D^T$, i.e., 
  $$L(\delta)= \gls+ \delta D^T \equiv (X^T\Lambda X)^{-1}X^T \Lambda+\delta D^T,$$
  for an appropriately selected $D$. The idea behind our
  counter-examples is that when using a perturbed estimator (with perturbation $\delta>0$), that
  is less accurate than $\gls$ under non-strategic agents, some agents
  will tend to choose a higher precision than under $\gls$ at
  equilibrium.  In all of our numerical examples, a small enough
  $\delta$ leads to an estimation cost at equilibrium smaller than the
  one of $\gls$ because some agents will use a higher precision. When
  $\delta$ increases too much, the gain brought by the higher
  precision of agents is  canceled by the loss of precision
  that is caused by using the estimator $L(\delta)$ that is less
  precise than $\gls$. 

  In all of our examples, the equilibrium costs
  of the estimators are very close to that of $\gls$ and our examples are
  far from attaining the bound
  $\frac{\ptwo (q + \pone)}{\pone (q + \ptwo)}$ provided by
  Theorem~\ref{GLSoptimaltofactor}. We believe that this bound is
  loose and can probably be refined.

  We present three examples because each example is of independent interest.  The first two involve $1$-dimensional models ($d=1$). In the first
  example, we use a perturbation term that affects all agents. For
  this example, we believe that $\gls$ is sub-optimal only when the
  two exponents $\pone$ and $\ptwo$ are significantly different.  In
  the second example, we use a perturbation that only affects  two
  ``less generous'' agents. This allows us to build a counter-examples
  with similar disclosure costs (with exponents $\pone=1.01$ and
  $\ptwo=1.1$). Our third example includes several counter examples in
  settings for different values  $d\ge 2$. This setting has $d$ symmetrical agents  
 and a single $(d+1)$-th agent whose public vector
  $\xb_{d+1}$ is significantly different.
  
  To ensure the reproducibility of these results, we make the code used to compute
  the equilibria and to produce the figures in this section publicly available.\footnote{\url{https://github.com/ngast/strategicLinearRegression}.}

\paragraph{Example 1 (1-dimensional model with two agents).} We consider a
$1$-dimensional model ($d=1$) with two agents ($n=2$) in which the
public data of each agent is $x_i=1$. For such a game, the estimator
$\gls$ is
$(X^T\Lambda X)^{-1}X^T\Lambda \tilde{\yb}=
(\lambda_1+\lambda_2)^{-1}\lambdab^T \tilde{\yb} $ and its covariance
is $1/(\lambda_1+\lambda_2)$. We consider a linear estimator
$L(\delta)$ of the form
\begin{equation*}
\gls+\left[\begin{smallmatrix}\sqrt{\delta}\\-\sqrt{\delta}\end{smallmatrix}\right]=\left[\begin{smallmatrix}\lambda_1/(\lambda_1+\lambda_2)+\sqrt{\delta}\\ \lambda_2/(\lambda_1+\lambda_2)-\sqrt{\delta}\end{smallmatrix}\right].
\end{equation*}
According to \eqref{eq:cov}, its covariance is
$1/(\lambda_1+\lambda_2) + \delta/\lambda_1+\delta/\lambda_2$, where
$\delta/\lambda_1+\delta/\lambda_2$ is the loss of precision due to
using a linear estimator that is less precise than $\gls$.  We assume
that the disclosure cost of Agent~1 is
$c_1(\lambda) = \lambda^{1.01}$ ($\pone=1.01$) while the disclosure
cost of Agent~2 is $c_2(\lambda) = \lambda^{20}$
($\ptwo=20$). The scalarization function is the identity, which means
that
$f_{L(\delta)}(\lambdab)=1/(\lambda_1+\lambda_2) +
\delta/\lambda_1+\delta/\lambda_2$. We set the maximal precision to $1/\sigma^2=1$.

In Figure~\ref{fig:GLS_ex1}(a), we plot the estimation cost at
equilibrium $f_{L(\delta)}(\lambdabeq_{L(\delta)})$ as a function of
$\delta$. We observe that with \gls we get an estimation cost of
approximately $0.99$. When $\delta$ increases, the estimation cost at
equilibrium decreases up to $\delta = 0.012$ for which it reaches
approximately $0.96$. This decrease is explained by the fact that for
small $\delta$, the gain due to a higher precision used by Agent~1 is
larger than the loss of precision $\delta/\lambda_1+\delta/\lambda_2$.
When $\delta$ exceeds $0.012$, this loss of precision is more
important than the gain due to higher precision. This behavior is
further illustrated in Figure~\ref{fig:GLS_ex1}(b), where we plot the
precision released by the two agents. We observe that the precision
of Agent~1 increases with $\delta$ while the precision of Agent~2
decreases (slightly).

\begin{figure}[ht]
  \centering
  \begin{tabular}{cc}
    \includegraphics[width=.48\linewidth]{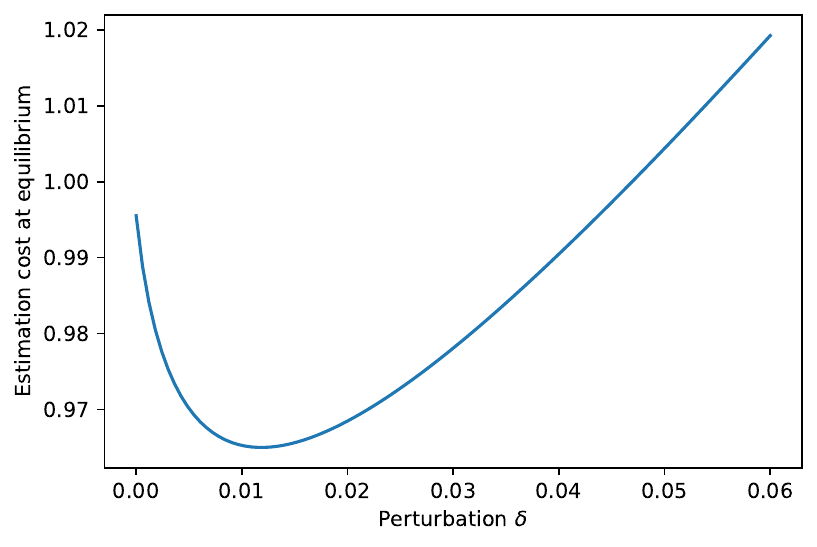}
    &\includegraphics[width=.48\linewidth]{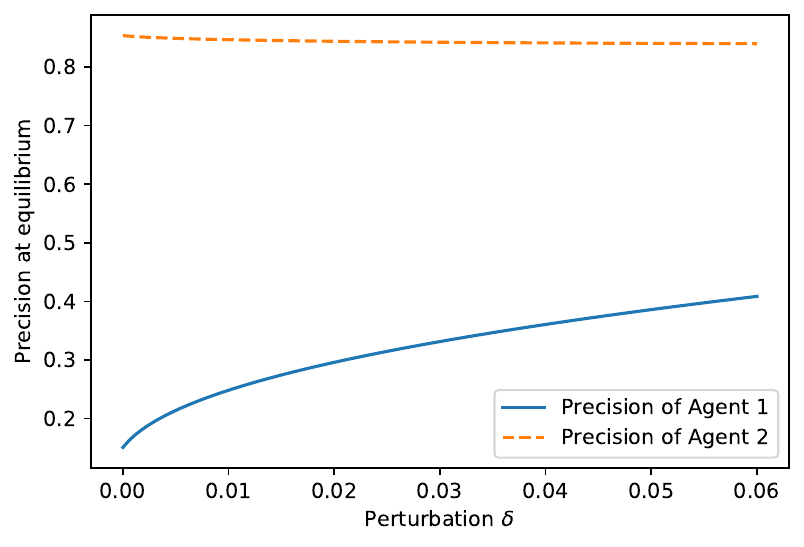}\\
    (a) Estimation cost 
    & (b) Precision of agents 
  \end{tabular}
  \caption{Counter-example 1: Estimation cost and precision of agents
    as a function of the perturbation $\delta$. }
  \label{fig:GLS_ex1}
\end{figure}

\paragraph{Example 2 (1-dimensional model with four agents).} We consider a
$1$-dimensional game with four agents in which the public data of each
agent equals $x_i = 1$. Agents
1 and 2 have disclosure costs $c_i(\lambda) = \lambda^{1.01}$ while
Agents 3 and 4 have disclosure costs
$c_i(\lambda) = \lambda^{1.1}$. We consider a linear unbiased
estimator that is equal to $\gls$ plus a perturbation cost that only
affects the first two agents:
$D=[\sqrt{\delta},-\sqrt{\delta},0,0]$. Note that this perturbation is
only applied to the most selfish agents as they are the ones we must
incentivize to give more. We set the maximal precision to $1/\sigma^2=1$.

In Figure~\ref{fig:GLS_ex2}, we plot the
estimation cost at equilibrium $f_{L(\delta)}(\lambdabeq_{L(\delta)})$
as a function of $\delta$. With $\gls$ ($\delta=0$), we get an
estimation cost of $0.9955$, which is larger than the value $0.9950$
that we obtain for $\delta=3.10^{-4}$. As for Example~1, when $\delta$
increases, the precisions used by the least
generous agents (Agents~1 and 2) increase while the
precisions of the most generous agents
decrease.

\begin{figure}[ht]
  \centering
  \begin{tabular}{cc}
    \includegraphics[width=.48\linewidth]{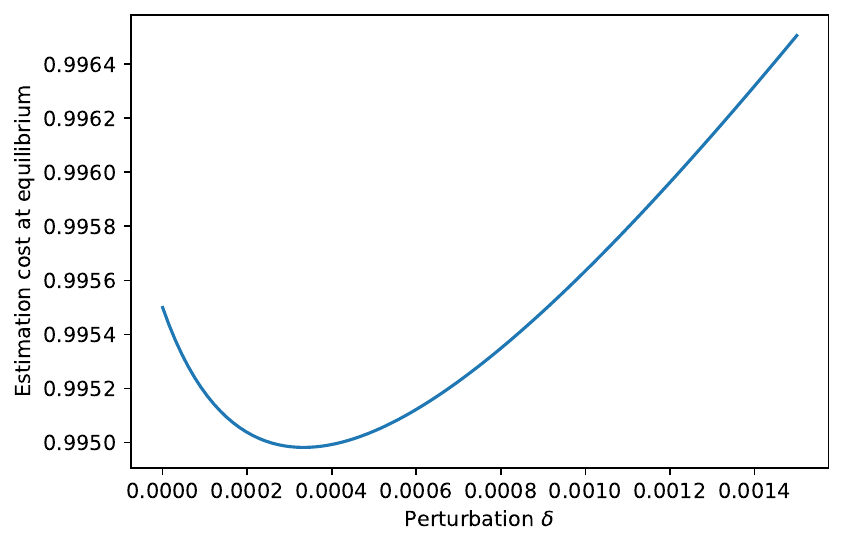}
    &\includegraphics[width=.48\linewidth]{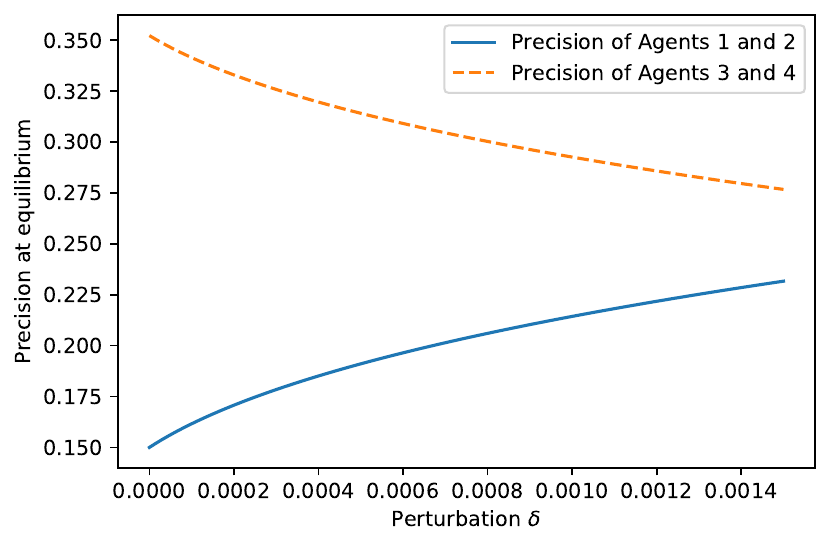}\\
    (a) Estimation cost 
    & (b) Precision of agents
  \end{tabular}
  \caption{Counter-example 2: Estimation cost and precision of agents
    as a function of the perturbation $\delta$. }
  \label{fig:GLS_ex2}
\end{figure}

While the previous two counter-examples are in dimension 1 and with agents that all have $x_i = 1$, the sub-optimality of \gls is not limited to that case. To illustrate that, we consider in the next counter-example models in dimension $d$ with $d\ge 2$. Note that, as we assume that matrix $X$ has rank $d$, we need at least $d$ players whose feature vectors $\xb_i$'s span the $d$ dimensions. Note also that, with $d$ players in $d$ dimensions, $\gls$ is the only linear unbiased estimator. Indeed, as matrix $X$ would then be invertible, the condition in \eqref{Dconst} leads to $D^T = 0$. In Example 3 below, we consider the simplest case of models with $d+1$ agents, though it is clear that one could construct similar counter examples with any number of agents larger than or equal to $d+1$.

  \paragraph{Example 3 ($d$-dimensional models with $d+1$ agents).} We
  consider a $d$-dimensional game with $d+1$ agents. The public data
  of the first $d$ agents spans the $d$ dimensions: $\xb_i$ is
  a vector where all components equal $0$ except the $i$th one that
  is equal to $1$. All components of the public data of Agent $d+1$
  are equal to $1/d$: $\xb_{d+1}=[1/d, \cdots, 1/d]^T$.  We assume
  that the disclosure costs of the first $d$ agents are
  $c_i(\lambda)=\lambda^{20}$ (for $i\in\{1, \cdots, d\}$), and the
  disclosure cost of the last agent is
  $c_{d+1}(\lambda)=\lambda^{1.5}$.  We set the maximal precision to
  $1/\sigma^2=1$.

  The perturbation matrix $D$ is a $(d+1)\times d$ matrix whose first
  column is $\sqrt{\delta}[1, \cdots, 1,-d]$, all other entries being $0$. Hence,
  the public feature matrix $X$ and the perturbation matrix $D$ are
  the following $(d+1)\times d$ matrices:
  \begin{align}
    \label{eq:Xpublic}
    X = \left[
    \begin{array}{cccc}
      1&&0\\
       &\ddots&\\
       0&&1\\
      1/d&\dots&1/d
    \end{array}
                 \right],
              &&D = \left[
                \begin{array}{cccc}
                  \sqrt{\delta}&0&0&\dots\\
                  \vdots&0&0&\dots\\
                  \sqrt{\delta}&0&0&\dots\\
                  -d\sqrt{\delta}&0&0&\dots
                \end{array}
                                      \right].
  \end{align}
  It is easy to verify that $D^TX=0$, which implies $L(\delta)=\gls+D^T$
  is an unbiased estimator.

In Figure~\ref{fig:higher-dim}, we report the estimation cost at
equilibrium $f_{L(\delta)}(\lambdabeq_{L(\delta)})$ as a function of
$\delta$. We consider models of dimension $d\in\{2,5,10,15\}$. We
observe that for all dimensions $d$, the behavior is similar to the
one observed in Figure~\ref{fig:GLS_ex1}(a) and \ref{fig:GLS_ex2}(a):
when $\delta$ is small enough, using the estimator $L(\delta)$
provides a higher precision at equilibrium (i.e., a lower equilibrium estimation cost as seen on the graphs). This comes from the fact
that when $\delta$ increases, the precision at equilibrium provided by
Agent~$d+1$ increases with $\delta$ whereas the precision provided by
Agents $1$ to $d$ is almost independent of $\delta$. When $\delta$
increases too much, the estimation cost increases again because of
the non-optimality of the estimator $L(\delta)$ (for given individual precisions). We also observe that
the maximal gain that can be obtained by using an estimator other than
$\gls$ (and the perturbation $\delta$ for which it is achieved with our particular perturbation matrix $D$) seems to decrease when the dimension $d$ increases.

\begin{figure}[ht]
  \centering
  \begin{tabular}{cc}
    \includegraphics[width=.45\linewidth]{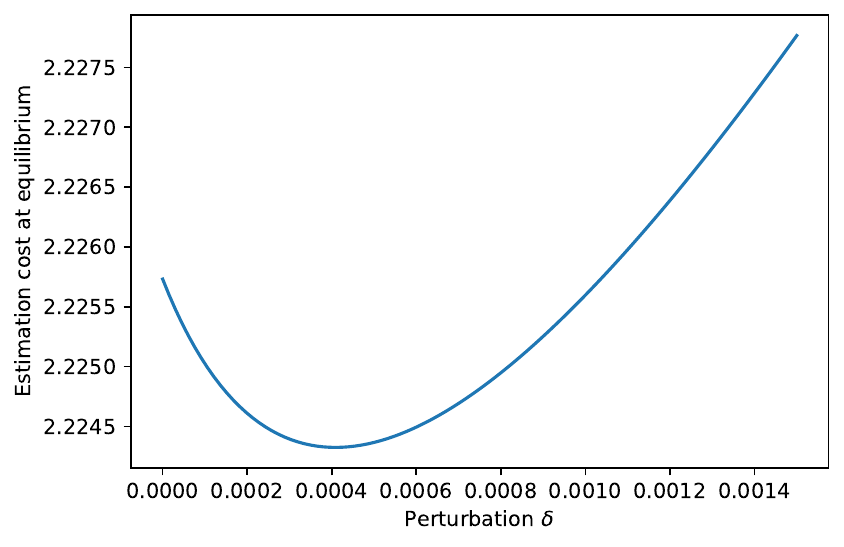}
    &\includegraphics[width=.45\linewidth]{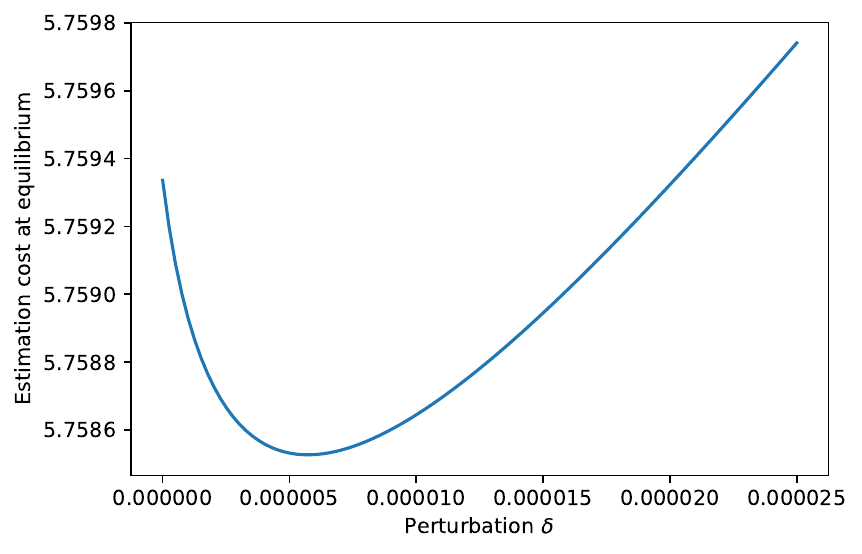}\\
    (a) $d=2$ & (b) $d=5$\\
    \includegraphics[width=.45\linewidth]{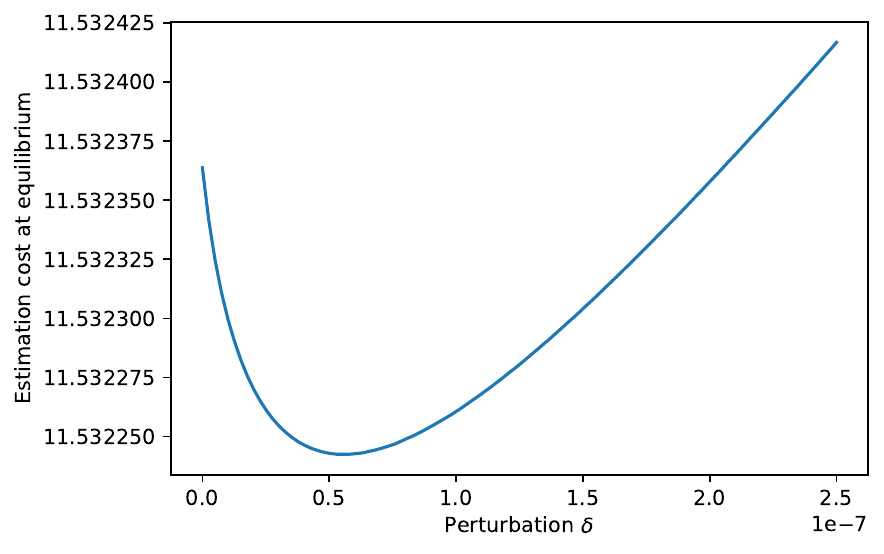}
    &\includegraphics[width=.45\linewidth]{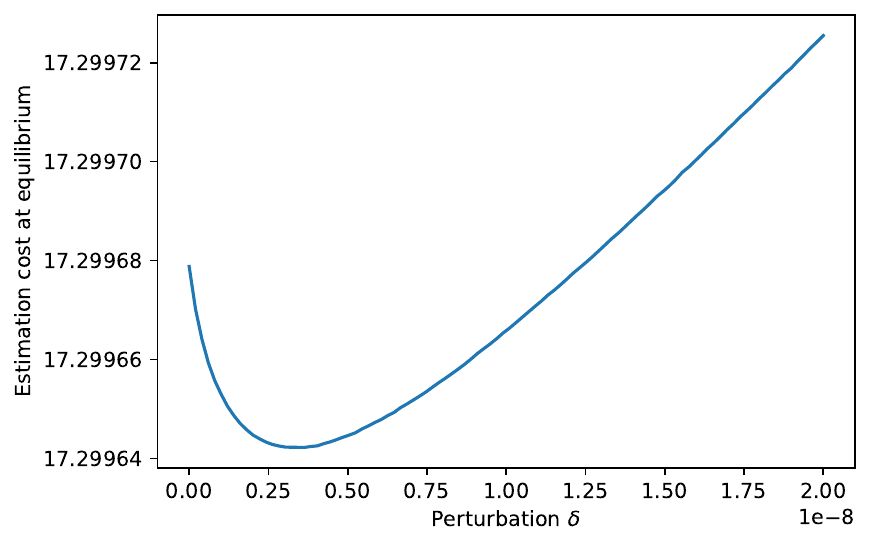}\\
    (c) $d=10$ & (d) $d=15$
  \end{tabular}
  \caption{Counter-example 3: Estimation cost and precision of agents
    as a function of the perturbation $\delta$ for models in dimension $d \ge 2$. }
  \label{fig:higher-dim}
\end{figure}

Finally, although the public feature matrix $X$ in \eqref{eq:Xpublic} has a particular form, many $d$-dimensional
models with $d+1$ agents can be cast in this model via an appropriate change of basis. In fact, we
conjecture that for any matrix of public features $X$ with at least $d+1$
agents, there exist disclosure costs such that \gls is not optimal.

\section{Concluding Remarks}\label{conclusions}

This paper studies linear regression in the presence of strategic data sources, modeling the precision choice as a non-cooperative game with a public good component. We establish existence of a unique non-trivial Nash equilibrium, and study its efficiency for a large class of disclosure and estimation cost functions. We also show an extension of the Aitken/Gauss-Markov theorem to this non-cooperative setup under certain conditions and examples in which the generalized least squares estimator is not optimal. 

Our Aitken/Gauss-Markov-type theorem is weaker than these two classical results in three ways. First, we proved the generalized least squares estimator is only approximately optimal in the case of a homogeneous estimation cost and of near-homogeneous disclosure cost and is not always optimal. Second, the optimality of the generalized least squares estimator in the case of monomial disclosure cost functions of same degree is shown w.r.t.~the homogeneous scalarization chosen, rather than the positive semidefinite order.
Finally, Theorem~\ref{GLSoptimaltofactor} applies to linear estimators whose difference from $\gls$ does not depend on the actions $\lambdab$. In the presence of arbitrary dependence on $\lambdab$, the non-trivial equilibrium need not be unique (or even exist). Understanding when this occurs, and proving optimality results in this context, also remains open.

Our model assumes that the precision chosen by each agent is known to the analyst. Amending this assumption brings issues of truthfulness into consideration: in particular, an important open question is whether there exists an estimator (viewed as a mechanism) that induces truthful precision reporting among agents, at least in equilibrium. An Aitken-type theorem seems instrumental in establishing such a result.  

Our analysis of the Nash equilibrium assumes complete information, that is that agents know the costs and features of other agents. Extending it to a Bayesian setting is an interesting open direction. Still, even when the complete information assumption does not hold, our present results are indicative in at least two ways. First, as our game is a potential game, we know that many natural dynamic evolutions of the game will converge to the Nash equilibrium. Second, if the number of agents grows large, we expect that the Nash equilibrium and Bayesian Nash equilibrium would be close since the empirical distribution of costs/features would then be close to the true underlying distribution.

\bibliographystyle{plain}
\bibliography{privacy_references,experimental_design_references}

\appendix

\section{Proof of Theorem~\ref{thm.Nash}}\label{proof.thm.Nash}

In this proof, we show that the potential function $\Phi$ is strictly
convex on its effective domain which implies that the set of Nash
equilibria that lie in its effective domain coincides with the set
of local minima of $\Phi$. By strict convexity, $\Phi$ has at most one
such local equilibrium. To conclude the proof, we then show that this
minimum is attained. 

The potential function
$\Phi(\lambdab)=f(\lambdab)+\sum_ic_i(\lambda_i)$ takes values in the
extended positive real numbers line
$\bar{\R}_+=\R_+\cup\{+\infty\}$. By
Assumption~\ref{privacyassumption}, the disclosure costs $c_i(\cdot)$
are finite on $[0, 1/\sigma^{2}]$ since they are continuous on a
compact set. Therefore, $\Phi(\cdot)$ is finite if and only if $f(\cdot)$
is finite, i.e.,
$\dom \Phi \equiv \{ \lambdab: \Phi(\lambdab)<\infty\} = \dom f$,
where $\dom$ is the effective domain. Recall that since $X$ has rank
$d$, $(0,1/\sigma^{2}]^n\subseteq \dom \Phi$, and $\dom \Phi$ is
non-empty.

Recall that $V(\lambdab)=(X^T\Lambda X)^{-1}$. This implies that $V$ is
strictly convex and goes to infinity when $A(\lambdab)=X^T\Lambda X$
goes to a non-invertible matrix (i.e., the largest eigenvalue of $V$ goes to infinity for any sequence $\lambdab_n$ that converges to a $\lambdab$ such that $A(\lambdab)$ is non-invertible). As $F$ is convex and increasing, this
shows that $f(\lambdab)=F(V(\lambdab))$ is strictly convex and goes to
$+\infty$ when $A(\lambdab)$ goes to a non-invertible matrix, which then 
implies that $f(\lambdab):[0,1/\sigma^2]^n\to\bar{R}_+$ is
continuous. As the functions $c_i$ are convex, we conclude that the
potential function $\Phi$ is strictly convex and continuous on
$\bar{\R}_+$.

Let $B$ be the subset of $\lambdab$ such that
$\Phi(\lambdab)\le\Phi(1/\sigma^2\dots1/\sigma^2)$. By continuity and
convexity of $\Phi$, $B$ is a non-empty convex and compact subset of
$[0,1/\sigma^2]^n$ on which $\Phi(\lambdab)<\infty$. This implies that
the unique minimum of $\Phi$ is attained in $B\subseteq \dom(\Phi)$.
\qed

\section{Proof of Theorem~\ref{general}}\label{proof.general}

Under Assumptions~\ref{privacyassumption} and~\ref{estimationassumption}, the unique non-trivial equilibrium $\lambdabeq$ minimizes the potential function $\Phi(\lambdab)=\sum_{i\in N}c_i(\lambda_i)+f(\lambdab)$. Then, for $\lambdab^\opt$ a minimizer of the social cost:
$$\Phi(\lambdabeq)\leq \Phi(\lambdab^\opt)= \sum_{i\in N}c_i(\lambda_i^\opt)+f(\lambdab^\opt) \leq \sum_{i\in N}c_i(\lambda_i^\opt)+nf(\lambdab^\opt)=\opt $$
by the positivity of $f$. On the other hand, $\SC(\lambdabeq)\leq n\Phi(\lambdabeq)$, by the positivity of $c_i$, and the theorem follows.
\qed


\section{Proof of Theorem~\ref{generalizedBound}}\label{proofofthmgeneralized}

To simplify the notation, in this proof, we write $p$ instead of $\pone$; hence we show that $\pos \le n^{\frac{q}{p + q}}$.
 
\noindent\textbf{Upper Bound.}  Recall that Assumption~\ref{homogeneityc} implies that $\forall \lambda \in \R_+, \forall a \ge 1, a^{p}c_i(\lambda) \le c_i(a\lambda)$.  This implies, by rewriting the assumption with $\lambda^{\prime} = a\lambda$, that $c_i(\frac{\lambda^{\prime}}{a}) \le a^{-p}c_i(\lambda^{\prime})$ for all $a \ge 1$ and for all $\lambda^{\prime}$. 

Recall that we denote by $\lambdabeq$ the unique non-trivial Nash equilibrium. Suppose that $\pos > n^{\frac{q}{p + q}}$, that is 
\begin{align*}
  \sum_{i\in N} c_i(\lambdaeq_i) + nf(\lambdabeq) &> n^{\frac{q}{q + p}} (\sum_{i\in N} c_i(\lambda_i^\opt) + nf(\lambdab^\opt)).
\end{align*}
We will show that this implies that $\lambdabeq$ is not an equilibrium, which is a contradiction. 

By using that $c_i(\lambdaeq_i)\ge0$ and dividing the above inequality
by $n$, we obtain:
\begin{align*}
  \sum_{i\in N} c_i(\lambdaeq_i) + f(\lambdabeq)
  &\ge \frac1n\left(\sum_{i\in N} c_i(\lambdaeq_i) + nf(\lambdabeq)\right) \\
  &> n^{-\frac{p}{q + p}}\sum_{i\in N} c_i(\lambda_i^\opt) + n^{\frac{q}{p + q}}f(\lambdab^\opt)\\
  &\ge \sum_{i\in N} c_i\left(\frac{\lambda_i^\opt}{n^{\frac{1}{p + q}}}\right) + 
    f\left(\frac{\lambdab^\opt}{n^{\frac{1}{p + q}}}\right),
\end{align*}
where for the last inequality, we used Assumption~\ref{homogeneityc}
and Assumption~\ref{homogeneityf} with $a=n^{1/(p+q)}$.

To conclude the proof, we remark that
$\frac{\lambdab^\opt}{n^{1/(p + q)}} \le \lambdab^\opt$ which implies
that $\frac{\lambdab^\opt}{n^{1/(p + q)}}$ is a valid strategy profile. This
would imply that $\lambdabeq$ is not the minimum of the potential
function which is a contradiction. Thus, we have
$\pos \le n^{\frac{q}{p + q}}$.\qed
\medskip

\noindent\textbf{Lower Bound.}  Fix $p\geq 1$ and $q \geq 1$. We consider a
$1$-dimensional model ($d=1$) with $x_1=1$ and
$\sigma^2=(q/p)^{1/(p+q)}$.  Let $c_i(\lambda_i) = \lambda_i^p$ for
all $i$ and $F(V) = \trace(V)^q=V^q$ (the last equality holds
because when $d=1$, the co-variance matrix is a scalar).  Hence, the
co-variance matrix is $V(\lambdab)=(\sum_{i\in N}\lambda_i)^{-1}$.

As all agents are identical, and by uniqueness of the Nash equilibrium, the Nash equilibrium is a symmetric Nash equilibrium where all agents will give the same value $\lambda^*$ where $\lambda^*$ is the unique minimizer of the potential function:
\begin{align*}
  n\lambda^p + (n\lambda)^{-q}.
\end{align*}
The minimum of this function is attained when its derivative is equal
to $0$. This implies that $np\lambda^{p-1}=nq (n\lambda)^{-q-1}$ which
implies that $\lambda^{p+q}=(q/p)n^{-1-q}$. This shows that
$\lambda^*=((q/p)n^{-1-q})^{1/(p+q)}$.

Similarly, the socially optimal $\lambdab^{\opt}$ is also symmetric and is attained when all agents give $\lambda^{\opt}$ the unique minimizer of the social cost:
\begin{align*}
  n\lambda^p + n(n\lambda)^{-q}.
\end{align*}
This implies that
\begin{equation}
\label{eq.loptlstar}
\lambda^\opt=(n(q/p)n^{-1-q})^{1/(p+q)}=n^{1/(p+q)}\lambda^*.
\end{equation}
Hence, we get:
\begin{align*}
  \pos &= \frac{C(\lambdabeq)}{C(\lambdab^\opt)} = \frac{ n (\lambda^*)^p+
         n(n\lambda^*)^{-q}}{ n (\lambda^\opt)^p+
         n(n\lambda^\opt)^{-q}}\\
       &= \frac{(\lambda^*)^p+ (n\lambda^*)^{-q}}{ (\lambda^\opt)^p+
         (n\lambda^\opt)^{-q}}\\
       &=
         \frac{(n\lambda^*){-q}}{(n\lambda^{\opt})^{-q}}\frac{(\lambda^*)^{p+q}+1}{(\lambda^\opt)^{p+q}+1}\\
       &=
         \left(\frac{\lambda^\opt}{\lambda^*}\right)^q\frac{1+(\lambda^*)^{p+q}}{1+(\lambda^\opt)^{p+q}}\\
       &=n^{q/(p+q)} \frac{ 1 + (q/p)n^{-1-q}} { 1 + (q/p)n^{-q}}, 
\end{align*}
where we use the expression in \eqref{eq.loptlstar} for $\lambda^*$ and $\lambda^{\opt}$ in
the last line. This shows that, for any $\epsilon$, for large enough $n$, the price of stability
is at least $n^{q/(p+q)}(1-\varepsilon)$.  \qed

\section{Proof of Theorem~\ref{GLSoptimaltofactor}}\label{GLSOptimal}

\subsection{Proof of (i)}

We consider the same setting as Example~1, i.e., a $1$-dimensional
model ($d=1$) with two agents in which the public data of each agent
is $x_i=1$. For such a game, the $\gls$ estimator is
$(X^T\Lambda X)^{-1}X^T\Lambda \tilde{\yb}=
(\lambda_1+\lambda_2)^{-1}\lambdab^T \tilde{\yb}$ and its covariance
is $1/(\lambda_1+\lambda_2)$. We consider a linear estimator
$\hat{\betab}(\delta)$ with $\delta \ge 0$ of the form
$\hat{\betab}_\gls+ \deltab^T \tilde{\yb}$ where
$\deltab\in\mathbb{R}^2$ is a vector with coefficients
$\delta_1 = -\delta_2 = \sqrt{\delta}$. Note that 
$\delta_1 = -\delta_2$ guarantees that this linear estimator is
unbiased.  We assume that the disclosure cost of
Agent~1 is $c_1(\lambda) = \lambda^{p_1}$
while the disclosure cost of Agent~2 is
$c_2(\lambda) = \lambda^{p_2}$. For a given $\delta$, we
  denote the equilibrium of the game by $\lambdabeq(\delta)$.
  
  Overall, this proof is decomposed in two steps:
\begin{quote}
  Step~1: We compute the derivative of the estimation cost at
  $\delta=0$ to show that it is negative if and only if
  $\lambdaeq_1(0)(2p_1-p_2 - p_1p_2) +
  \lambdaeq_2(0)(2p_2-p_1-p_1p_2)>0$.
\end{quote}
\begin{quote}
  Step~2: We show that there exists $x>0$ such that the above
  inequality is satisfied for $p_1=1/x$ and $p_2=1+x$.
\end{quote}

We describe both steps in detail below.

\paragraph{Step~1.} According to \eqref{eq:cov}, the covariance of the
estimator is
$1/(\lambda_1+\lambda_2) + \delta/\lambda_1+\delta/\lambda_2$, where
$\delta/\lambda_1+\delta/\lambda_2$ is the loss of precision due to
using a linear estimator that is less precise than $\gls$. We assume
that the scalarization function is the identity, which means that the
estimation cost is
\begin{align}
  \label{eq:f_L}
  f_{\delta}(\lambdab) = \frac1{\lambda_1+\lambda_2} + \frac\delta{\lambda_1}+\frac\delta{\lambda_2}.
\end{align}
The equilibrium $\lambdabeq(\delta)$ is the minimum of the
  potential function
  $\Phi_\delta(\lambdab)=f_\delta(\lambdab)+\lambda_1^{p_1}+\lambda_2^{p_2}$.
  The estimation cost at equilibrium is
  $f_\delta(\lambdabeq(\delta))$. Our goal in this step is to compute
  the derivative of $f_\delta(\lambdabeq(\delta))$ with respect to $\delta$ and to
  obtain a condition that ensures that it is negative at $\delta=0$.
  Let use denote by $(\lambdaeq)'_i(\delta)=d\lambdaeq_i(\delta)/(d\delta)$
  the derivative of $\lambdaeq_i(\delta)$ with respect to $\delta$.  To
  simplify notation, we will omit the dependence on $\delta$ and
  simply denote $\lambdaeq_i=\lambdaeq_i(0)$ and
  $\lambda_i'=(\lambdaeq)'_i(0)$ when it is not confusing.  The derivative
  of the estimation cost evaluated at $\delta=0$ is equal to
  \begin{align}
    \label{eq:der_estimation_cost}
    \frac{d}{d\delta}(f_\delta(\lambdabeq(\delta)))\Big|_{\delta=0}
    &=-\frac{\lambda'_1+\lambda'_2}{(\lambdaeq_1+\lambdaeq_2)^2}+\frac1{\lambdaeq_1}+\frac1{\lambdaeq_2}
      =-\frac{\lambda'_1+\lambda'_2}{(\lambdaeq_1+\lambdaeq_2)^2}+\frac{\lambdaeq_1+\lambdaeq_2}{\lambdaeq_1\lambdaeq_2}. 
  \end{align}
  In particular, the above derivative is negative if an only if
  $\frac{\lambda'_1+\lambda'_2}{(\lambdaeq_1+\lambdaeq_2)^3}\lambdaeq_1\lambdaeq_2>1$.
  In what follows, we compute the derivatives $\lambda_i'$ as a
  function of the values of $\lambdaeq_i$ and $p_i$.

The equilibrium $\lambdabeq(\delta)$ is the minimum of the potential
function
$\Phi_\delta(\lambdab)=f_\delta(\lambdab)+\lambda_1^{p_1}+\lambda_2^{p_2}$. By
using the first order condition
$\partial\Phi_\delta/\partial\lambda_i=0$, this implies that for all
$\delta\ge0$:
\begin{align}
  \label{eq:derivative}
  -\frac1{(\lambdaeq_1(\delta)+\lambdaeq_2(\delta))^2} -
  \frac\delta{(\lambdaeq_i(\delta))^2} + 
  p_i(\lambdaeq_i(\delta))^{p_i-1} = 0, \quad \text{for}~i\in\{1,2\}.
\end{align}

The derivative of $\lambdaeq_i(\delta)$ with respect to $\delta$
exists thanks to the implicit function theorem.  By
differentiating \eqref{eq:derivative} with respect to $\delta$, we
obtain
\begin{align}
  0 &= \frac{d}{d \delta}\left(
      -\frac1{(\lambdaeq_1(\delta)+\lambdaeq_2(\delta))^2} - \frac\delta{(\lambdaeq_i(\delta))^2} +
      p_i(\lambdaeq_i(\delta))^{p_i-1}\right)\nonumber\\
    &=  2\frac{(\lambdaeq)'_1(\delta)+(\lambdaeq)'_2(\delta)}{(\lambdaeq_1(\delta)+\lambdaeq_2(\delta))^3}
      -
      \frac1{(\lambdaeq_i(\delta))^2}+2\delta\frac{(\lambdaeq)'_i(\delta)}{(\lambdaeq_i(\delta))^3}+p_i(p_i-1)(\lambdaeq_i(\delta))^{p_i-2}(\lambdaeq)'_i(\delta).
      \label{eq:der1}
\end{align}
Equation~\eqref{eq:derivative}, evaluated at $\delta=0$, shows that
$p_i(\lambdaeq_i)^{p_i-1}=\frac{1}{(\lambdaeq_1+\lambdaeq_2)^2}$. Evaluating
Equation~\eqref{eq:der1} at $\delta=0$ and plugging this equality
gives
\begin{align}
  0&= 2\frac{\lambda'_1+\lambda'_2}{(\lambdaeq_1+\lambdaeq_2)^3}
     -
     \frac1{(\lambdaeq_i)^2}+p_i(p_i-1)(\lambdaeq_i)^{p_i-2}\lambda_i'\nonumber
  \\
   &=2\frac{\lambda'_1+\lambda'_2}{(\lambdaeq_1+\lambdaeq_2)^3} 
     -
     \frac1{(\lambdaeq_i)^2}+\frac1{(\lambdaeq_1+\lambdaeq_2)^2}\frac{p_i-1}{\lambdaeq_i}\lambda_i'.
      \label{eq:der}
\end{align}
In order to isolate the term $\lambda_1'+\lambda_2'$, we multiply the
above equation by $\lambdaeq_i/(p_i-1)$ and we sum over
$i\in\{1,2\}$. This gives:
\begin{align*}
  0 &= 2\frac{\lambda_1'+\lambda_2'}{(\lambdaeq_1+\lambdaeq_2)^3}\left(\frac{\lambdaeq_1}{p_1-1}+\frac{\lambdaeq_2}{p_2-1}\right)
  - \frac1{\lambdaeq_1(p_1-1)}-\frac1{\lambdaeq_2(p_2-1)} +
      \frac{\lambda'_1+\lambda'_2}{(\lambdaeq_1+\lambdaeq_2)^2}
  \\
    &=
      \frac{\lambda_1'+\lambda_2'}{(\lambdaeq_1+\lambdaeq_2)^3}\left(\frac{2\lambdaeq_1}{p_1-1}+\frac{2\lambdaeq_2}{p_2-1}+\lambdaeq_1+\lambdaeq_2\right) 
      - \frac1{\lambdaeq_1(p_1-1)}-\frac1{\lambdaeq_2(p_2-1)}.
\end{align*}
This shows that
\begin{align*}
  \frac{\lambda_1'+\lambda_2'}{(\lambdaeq_1+\lambdaeq_2)^3} &= \frac{
  \frac1{\lambdaeq_1(p_1-1)} +
  \frac1{\lambdaeq_2(p_2-1)}}{\frac{2\lambdaeq_1}{p_1-1}+\frac{2\lambdaeq_2}{p_2-1}+\lambdaeq_1+\lambdaeq_2}\\
  &=
    \frac{1}{\lambdaeq_1\lambdaeq_2}\frac{\lambdaeq_2(p_2-1)+\lambdaeq_1(p_1-1)}{2\lambdaeq_1(p_1-1)+2\lambdaeq_2(p_2-1)
    + (\lambdaeq_1+\lambdaeq_2)(p_1-1)(p_2-1)}.
\end{align*}
In particular, this implies that the derivative
\eqref{eq:der_estimation_cost} is negative if and only if
\begin{align*}
  \frac{\lambdaeq_2(p_2-1)+\lambdaeq_1(p_1-1)}{2\lambdaeq_1(p_1-1)+2\lambdaeq_2(p_2-1)
  + (\lambdaeq_1+\lambdaeq_2)(p_1-1)(p_2-1)}>1.
\end{align*}
After some algebra, this gives 
\begin{align}
  \lambdaeq_1(2p_1-p_2 - p_1p_2) + \lambdaeq_2(2p_2-p_1-p_1p_2)>0,
  \label{eq:gls3}
\end{align}
where, again, by abuse of notation we denote
$\lambdaeq_1=\lambdaeq_1(0)$ and $\lambdaeq_2=\lambdaeq_2(0)$.

  \paragraph{Step~2.} We now consider $p_1=1+1/x$ and $p_2=1+x$ and
  $x\to\infty$.  To emphasize the dependence in $x$, let us denote by
  $\lambdabeq(x)=(\lambdaeq_1(x),\lambdaeq_2(x))$ the value of the precision
  at equilibrium (for $\gls$) and $\Phi_x(\cdot)$ the potential of the
  game. By definition, $\lambdabeq(x)$ minimizes
  $\Phi_x(\lambdab)=1/(\lambda_1+\lambda_2)+\lambda_1^{1+1/x}+\lambda_2^{1+x}$.
  This implies that for all $\epsilon>0$,
  $\Phi_x(\lambdabeq(x))\le\Phi_x(0,1-\varepsilon)$. As
  $\lim_{x\to\infty}\Phi_x(0,1-\varepsilon)=1/(1-\varepsilon)$ and
  because this is true for all $\varepsilon$, this implies that
\begin{align*}
  \lim_{x\to\infty}\Phi_x(\lambdabeq(x))=
  \lim_{x\to\infty}\left(\frac1{\lambdaeq_1(x)+\lambdaeq_2(x)}+(\lambdaeq_1(x))^{1+1/x}+(\lambdaeq_2(x))^{1+x}\right)\le1. 
\end{align*}
This implies that $\lim_{x\to\infty}\lambdaeq_1(x)=0$ and
$\lim_{x\to\infty}\lambdaeq_2(x)=1$.

For our values of $p_1=1+1/x$ and $p_2=1+x$, the left-hand
  side of \eqref{eq:gls3} equals
  $\lambda_1(x)(1/x-2x - 1) + \lambda_2(x)(x-2/x -1)$. As
  $\lim_{x\to\infty}\lambda_2(x)=1$ and
  $\lim_{x\to\infty}\lambda_1(x)=0$, this term is asymptotically
  equivalent to $x$ and is therefore positive for $x$ large enough.
This implies that there exists a value $x$ such that
$d/(d\delta)f_\delta(\lambdabeq(\delta))<0$.  Hence, for this $x$,
there exists a perturbation value $\delta>0$ such that
$\hat{\betab}(\delta)$ is an estimator that is more efficient than
$\gls$.  \qed

\subsection{Proof of (ii)}

We will start by proving Lemma~\ref{sum_cost_vs_common}.  This lemma
can easily be explained if we recall Assumption~\ref{homogeneityc} and
Assumption~\ref{homogeneityf}. Indeed, they dictate how the different
components of the potential function behave when all agents multiply
or divide the amount of information they give. If the sum of individual costs is
too great compared to the common cost then dividing the amount
that all agents give greatly reduces the individual costs while slightly
augmenting the common cost, which is beneficial. The same goes the
other way around where agents multiply the amount they give. This
formalizes an intuition that one can have about this model: there is a
balance between the individual costs paid to achieve the objective of reducing the common cost and the
objective itself.

\begin{lemma}\label{sum_cost_vs_common}
	Under Assumptions \ref{privacyassumption}, \ref{estimationassumption}, \ref{homogeneityc} and \ref{homogeneityf},
        the ratio between the sum of individual costs and the common cost is
        bounded. Formally, the equilibrium $\lambdabeq$ satisfies: 
	\begin{equation*}
	\sum_{i\in N} c_i(\lambdaeq_i) \le \frac{q}{\pone}f(\lambdabeq)
        \text{\qquad and \qquad}
        f(\lambdabeq) \le \frac{\ptwo}{q}\sum_{i\in N} c_i(\lambdaeq_i).
	\end{equation*}
\end{lemma}

\begin{proof}
  This proof mainly relies on the fact that $\lambdabeq$ is the
  minimum of the potential function. 
  Let $\lambdabeq$ be the unique non-trivial equilibrium.  Let
  $\kappa \in (0, 1)$ be a multiplicative factor applied to the
  equilibrium profile. As $\lambdabeq$ is the minimum of the potential
  function, we have $\Phi(\lambdabeq) \le \Phi(\kappa \lambdabeq)$ and
  $\Phi(\lambdabeq) \le \Phi(\lambdabeq/\kappa)$. This implies that:
  \begin{align*}
    \sum_{i\in N} c_i(\lambdaeq_i) + f( \lambdabeq)
    &\le \sum_{i\in N} c_i(\kappa \lambdaeq_i) + f(\kappa \lambdabeq)  
      \le \kappa^{\pone}  \sum_{i\in N} c_i(\lambdaeq_i) +
      \kappa^{-q}f( \lambdabeq),\qquad\text{and}
    \\
    \sum_{i\in N} c_i(\lambdaeq_i) + f( \lambdabeq)
    &\le \sum_{i\in N} c_i( \lambdaeq_i/\kappa) + f(\lambdabeq/\kappa)\le
      \kappa^{-\ptwo} \sum_{i\in N} c_i(\lambdaeq_i) + \kappa^{q}f( \lambdabeq).
  \end{align*}
   The above equations imply that:
    \begin{align*}
      (1-\kappa^{\pone})\sum_{i\in N} c_i(\lambdaeq_i) &\le
      (\kappa^{-q}-1) f( \lambdabeq), \qquad\text{and}\\
      (1-\kappa^q)f( \lambdabeq) &\le (\kappa^{-\ptwo}-1)\sum_{i\in N} c_i(\lambdaeq_i).
    \end{align*}
 As $\kappa\in(0,1)$, we have $1-\kappa^{\pone}>0$ and
 $\kappa^{-\ptwo}-1>0$. Hence, the above equations imply
  that
    \begin{align*}
      \frac{1-\kappa^{q}}{\kappa^{-\ptwo}-1} f(\lambdabeq) \le
      \sum_{i\in N} c_i(\lambdaeq_i)
      &\le\frac{\kappa^{-q}-1}{1-\kappa^{\pone}}f(\lambdabeq).
    \end{align*}
  This inequality is valid for every $\kappa \in (0, 1)$. As
  $\lim_{\kappa\to1}\frac{1-\kappa^{q}}{\kappa^{-\ptwo}-1}=q/\ptwo$
    and
    $\lim_{\kappa\to1}\frac{1-\kappa^{-q}}{1-\kappa^{-\pone}}=q/\pone$,
  this gives \begin{equation*} \frac{q}{\ptwo} f(\lambdabeq)\le
    \sum_{i\in N} c_i(\lambdaeq_i) \le \frac{q}{\pone} f(\lambdabeq).
  \end{equation*}
\end{proof}

We are now ready to prove Theorem~\ref{GLSoptimaltofactor}(ii).
Let $\Phi_L(\lambdab) = \sum_{i\in N} c_i(\lambda_i) + f_L(\lambdab)$ be the
potential function for any linear unbiased estimator and
$\Phi_{\gls}(\lambdab) = \sum_{i\in N} c_i(\lambda_i) + f_{\gls}(\lambdab)$ be
the potential function for $\gls$.
Recall that $\lambdabeq_L$ and $\lambdabeq_{\gls}$ denote the non-trivial
equilibria for the linear unbiased estimator and for $\gls$ respectively.  By optimality of $\gls$, for all
$\lambdab$ we have $f_L(\lambdab)\ge f_{\gls}(\lambdab)$. This implies
that for all $\lambdab$, we have
$\Phi_L(\lambdab) \ge \Phi_{\gls}(\lambdab)$. Therefore 
\begin{align}
  \label{eq:lambdabeq_L}
  \Phi_L(\lambdabeq_L)=\min_{\lambdab}\Phi_L(\lambdab) \ge
  \Phi_{\gls}(\lambdabeq_{\gls})=\min_{\lambdab}\Phi_{\gls}(\lambdab).
\end{align}

By applying the inequalities of Lemma \ref{sum_cost_vs_common}, we obtain: 
\begin{equation*}
\Phi_{\gls}(\lambdabeq_{\gls}) = \sum_{i\in N} c_i((\lambdabeq_{\gls})_i) + f_{\gls}(\lambdabeq_{\gls}) \ge \frac{q}{\ptwo} f_{\gls}(\lambdabeq_{\gls}) +  f_{\gls}(\lambdabeq_{\gls}),
\end{equation*}
and 
\begin{equation*}
\Phi_L(\lambdabeq_L) = \sum_{i\in N} c_i((\lambdabeq_L)_i) + f_L(\lambdabeq_L) \le \frac{q}{\pone} f_L(\lambdabeq_L) +  f_L(\lambdabeq_L).
\end{equation*}
Combining the above two inequalities with \eqref{eq:lambdabeq_L}, we conclude that
\begin{equation*}
f_{\gls}(\lambdabeq_{\gls}) \le \frac{\frac{q}{\pone} + 1}{\frac{q}{\ptwo} + 1} f_L(\lambdabeq_L) = \frac{\ptwo (q + \pone)}{\pone (q + \ptwo)} f_L(\lambdabeq_L).
\end{equation*}

\end{document}